\DeclareMathOperator*{\argmax}{arg-max}
\newcounter{thm}
\newcounter{ex}
\newcounter{re}
\newcounter{def}
\newcounter{pro}
\newtheorem{theorem}[thm]{Theorem}
\newtheorem{lemma}[thm]{Lemma}
\newtheorem{proposition}[pro]{Proposition}
\newtheorem{example}[ex]{Example}
\newtheorem{corollary}[thm]{Corollary}
\newtheorem{defn}[def]{Definition}
\begin{document}

\title{Bargaining Mechanisms for One-Way Games
\footnote{An earlier, shorter version of this paper appeared in Proceedings of the Twenty-Fourth International joint conference on Artificial Intelligence (IJCAI), \citeyear{abeliuk2015Bargaining}.}}

\author[1,2]{Andr\'es Abeliuk}
\author[1,3]{Gerardo Berbeglia}
\author[1,4]{Pascal Van Hentenryck}


\renewcommand\Authands{ and }

\affil[1]{National ICT Australia (NICTA)}
\affil[2]{University of Melbourne, Australia}
\affil[3]{Melbourne Business School, University of Melbourne, Australia}
\affil[4]{Australian National University, Australia}


\date{}
\maketitle

\begin{abstract}
We introduce one-way games, a framework motivated by applications in
  large-scale power restoration, humanitarian logistics, and
  integrated supply-chains. The distinguishable feature of the games is that the
   payoff of some player is determined only by her own strategy and
  does not depend on actions taken by other players.
   We show that the equilibrium outcome in
  one-way games without payments and the social cost of any ex-post
  efficient mechanism, can be far from the optimum. We also show that
  it is impossible to design a Bayes-Nash incentive-compatible
  mechanism for one-way games that is budget-balanced, individually
  rational, and efficient. To address this negative result, we propose a privacy-preserving
  mechanism that is incentive-compatible and budget-balanced,
  satisfies ex-post individual rationality conditions, and produces an
  outcome which is more efficient than the equilibrium without payments.
  The mechanism is based on a single-offer bargaining and we show that a randomized
  multi-offer extension brings no additional benefit.
\end{abstract}

\section{Introduction}

When modeling economic interactions between agents, it is standard to
adopt a general framework where payoffs of individuals are dependent
on the actions of all other decision-makers. However, some agents may
have payoffs that depend only on their own actions, not on actions
taken by other agents. In this paper, we explore the consequences of
such asymmetries among agents. Since these features lead to a
restricted version of the general model, the hope is that we can
identify mechanisms that produce efficient outcomes by exploiting the
properties of this specific setting.

A classic application of this setting is Coase's example of a polluter
and a single victim, e.g., a steel mill that affects a laundry. The
Coase theorem (\citeyear{coase1960problem}) is often interpreted as a
demonstration of why private negotiations between polluters and
victims can yield efficient levels of pollution without government
interference. However, in an influential article, Hahnel and Sheeran
(\citeyear{hahnel2009misinterpreting}) criticize the Coase theorem by
showing that, under more realistic conditions, it is unlikely that an
efficient outcome will be reached.  They emphasize that the solution
is a negotiation, and not a market-based transaction as described by
Coase. As such, incomplete information plays an important role and
game theory and bargaining games can explain inefficient outcomes.

Other real-life applications are found in large-scale restoration of
interdependent infrastructures after significant disruptions
\citep{cavdaroglu2013integrating,coffrin2012last}, humanitarian
logistics over multiple states or regions \citep{van2010strategic},
supply chain coordination (see, e.g., Voigt (\citeyear{Voigt2011})),
integrated logistics, and the joint planning and the control of gas
and electricity networks. Consider for example the restoration of
the power system and the telecommunication network after a major
disaster.  As explained in \citet{cavdaroglu2013integrating}, there are one-way
dependencies between the power system and the telecommunication
network. This means, for instance, that some power lines must be
restored before some parts of the telecommunication network become
available.  It is possible to use centralized mechanisms for restoring
the system as a whole.  However, it is often the case that these
restorations are performed by different agencies with independent
objectives and selfish behavior may have a strong impact on the social
welfare. It is thus important to study whether it is possible to find
high-quality outcomes in decentralized settings when stakeholders
proceed independently and do not share complete information about
their utilities.

This paper aims at taking a first step in this direction by proposing
a class of two players one-way dependent decision settings which
abstracts some of the salient features of these applications and
formalizes many of Hahnel and Sheeran's critiques.  We present a
number of negative and positive results on one-way games. We first
show that Nash equilibria in one-way games, under no side payments,
can be arbitrarily far from the optimal social welfare.
Moreover, in contrast to Coase theorem, we show that when side payments are allowed
in a Bayes-Nash incentive-compatible setting, there is no ex-post efficient
individually rational, and budget-balanced mechanism for one-way
games.
To address this negative result, we focus on mechanisms that are
budget-balanced, individually rational, incentive-compatible that are relatively efficient. Our main positive result is a single-offer bargaining
mechanism which under reasonable assumptions on the players,
increases the social welfare compared to the setting where no side payments are allowed.
We also show that this single-offer mechanism cannot be improved by a (randomized) multi-offer
mechanism.

The rest of this paper is organized as follows. In Section \ref{section-game} we define
one-way games and study the properties of Nash equilibria. In section \ref{impossibility_result_section} we prove an impossibility result. The single offer mechanism is presented and analyzed in  sections \ref{section_single_offer} and \ref{section_price_of_anarchy} respectively. Finally, in section \ref{section_multi_offer} we present a multi-offer mechanism and we show that it doesn't improve the efficiency with respect to the single-offer one.

\section{One-Way Games}
\label{section-game}

\emph{One-way games} feature two players $A$ and $B$. Each player
$i\in{A,B}$ has a public action set $\mathcal{S}_i$ and we write
$\mathcal{S}=\mathcal{S}_A \times \mathcal{S}_B$ to denote the set of
joint action profiles. As most commonly done in mechanism design, we
model private information by associating each agent $i$ with a payoff
function $u_i:\mathcal{S} \times \Theta_i \rightarrow \mathbb{R}^+$,
where $u_i(s, \theta_i)$ is the agent utility for strategy profile $s$ when the
agent has type $\theta_i$. We assume that the player types are
stochastically independent and drawn from a distribution $f$ that is
common knowledge. We denote by $\Theta_i$ the possible types of player
$i$ and write $\Theta= \Theta_A \times \Theta_B$. If $\theta \in
\Theta$, we use $\theta_i$ to denote the type of player $i$ in
$\theta$. Similar conventions are used for strategies, utilities, and
type distributions.

A key feature of one-way games is that the payoff $u_A((s_A,s_B), \theta_A)$ of
player $A$ is determined only by her own strategy and does not depend on
$B$'s actions, i.e.,	
\[
\forall s_A, s_B, s_B',\theta_A: u_A((s_A,s_B),\theta_A) =
u_A((s_A,s_B'),\theta_A).
\]
As a result, for ease of notation, we use $u_A(s_A,\theta_A)$ to
denote $A$'s payoff. Obviously, player $B$ must act according to what
player $A$ chooses to do and we use $s_B(s_A,\theta_B)$ to denote the
best response of player $B$ given that $A$ plays action $s_A$ and
player $B$ has type $\theta_B$, i.e.,
\[
s_B(s_A,\theta_B) = \argmax_{s_B \in \mathcal{S}_B}u_B((s_A,s_B), \theta_B)
\]
where ties are broken arbitrarily. In this paper, we always assume
that ties are broken arbitrarily in $\argmax$ expressions.

One-way games assumes that players are risk-neutral agents and that after having observed the
realization of their own types, players simultaneously choose their actions.
As a consequence, if side payments are not allowed, player $A$ will play an action $s^N_A$ that yields her a
maximum payoff, i.e.,
\[
s^N_A(\theta_A) = \argmax_{s_A \in \mathcal{S}_A}u_A(s_A,\theta_A),
\]
Player $B$ will pick $s^N_B(\theta_B)$ such that her expected payoff
is maximized, i.e.,
\[
s_B^N(\theta_B) =  \argmax_{s_B \in \mathcal{S}_B} \mathbb{E}_{\theta_A} \left[ u_B(s_B(s^N_A(\theta_A),\theta_B),\theta_B)\right].
\]

\begin{sloppypar}
\noindent
The set of Nash equilibria (NE) is thus characterized by
$s^N(\theta)=(s^N_A(\theta_A),s^N_B(\theta_B)) \subseteq \mathcal{S}$.  The
best response $s_B^N(\theta_B)$ of player $B$ may be a bad outcome for
her even when $B$ has a much greater potential payoff.  Player $A$
achieves its optimal payoff, but our motivating applications aim at
optimizing a global welfare function
\[
SW((s_A,s_B),\theta) = u_A(s_A,\theta_A) + u_B((s_A,s_B),\theta_B).
\]
Thus, the global welfare achieved by the Nash equilibria can be expressed as  $SW(s^N(\theta),\theta)$.
We quantify the quality of the Nash equilibrium outcome with the price
of anarchy (PoA).
\end{sloppypar}

\begin{defn}
The price of anarchy of $s^N(\theta)\subseteq \mathcal{S}$ given type $\theta \in \Theta$ is defined as
\[ PoA(\theta) = \frac{\max_{s\in\mathcal{S}} SW(s,\theta)
}{\min_{s\in s^N(\theta)}SW(s,\theta)}. \]
\end{defn}
\noindent
A natural extension for the price of anarchy is to quantify the
expected worst-case equilibrium.

\begin{defn}
The Bayes-Nash price of anarchy is defined as
\[
PoA = \mathbb{E}_{\theta} \left[ PoA(\theta) \right].
\]
\end{defn}
\noindent
Note that the price of anarchy given type $\theta \in \Theta$ can be used to obtain a lower and an upper
bound on the Bayes-Nash price of anarchy in the following way,
\[
\min_{\theta \in \Theta} PoA(\theta) \leq PoA \leq \max_{\theta \in \Theta} PoA(\theta).
\]

\noindent
Throughout this paper, we use the following two running examples to illustrate key concepts.
\begin{example}\label{ex:1b}
  \normalfont Consider the instance where player $A$ has two possible
  actions $s_A^1, s_A^2 \in \mathcal{S}_A$.  Action $s_A ^1$ has a
  payoff  $u_A(s_A^1)$ distributed according to a uniform distribution between $0$
  and $100$, while action $s_A^2$ has a constant payoff  $u_A(s_A^2)=100$. The
  set of dominant actions for player B corresponds to the set of best
  responses $s_B(s_A^1)$ and $s_B(s_A^2)$. Let player $B$ have only one type and
  we set payoffs to be $u_B(s_B(s_A^1))=x$ and $u_B(s_B(s_A^2))=0$, where $x$ is a positive
  constant.  When no transfers are allowed, player $A$ will always
  play action $s_A^2$, yielding a social welfare of $100+0$.  If
  player A plays $s_A^1$, her expected payoff is $50$ and the expected
  social welfare is thus $50+x$. The price of anarchy is
  $\frac{50+x}{100}$ if $x\geq 50$ and $1$ otherwise. Notice that the
  PoA is an increasing function of $x$.
\end{example}
\begin{example}\label{ex:1}
  \normalfont
  \begin{sloppypar}
   Consider the instance where player $A$ has $n$ possible
  actions $s_A^1, s_A^2,\ldots, s_A^n \in \mathcal{S}_A$.  Each action has a
  payoff $u_A(s_A^i)$ which is independent and identically distributed according to
  a Uniform distribution between $0$ and $1$.
  For player $B$,  consider the set of best responses
  $s_B(s_A^1,\theta_B), s_B(s_A^2,\theta_B), \ldots, s_B(s_A^n,\theta_B)$.
  Let the expected payoffs be  $\mathbb{E}_{\theta_B}\left[u_B(s_B(s_A^i,\theta_B),\theta_B)\right]=\mu_i$ for $i=1,2,\ldots,n$,
  where $\mu_1\geq\mu_2\geq \cdots \geq \mu_n$.  All other payoffs are
  set to $0$, i.e., for any $\theta_B \in \Theta_B$, $s_A \in \mathcal{S}_A$ and $s_B\not =  s_B(s_A,\theta_B)$: $u_B\left( (s_B,s_A),\theta_B \right)=0$.
\end{sloppypar}
   When no transfers are allowed, player $A$ chooses her (realized) maximizing payoff
  action. Such action is distributed according to the largest order statistic, i.e.,
  the maximum between all payoffs. The largest order statistic between $n$
  standard Uniforms follows a $Beta(n,1)$ distribution, with mean
   $\frac{n}{n+1}$. Hence, the expected payoff of player $A$ is $\frac{n}{n+1}$.

  By symmetry of player $A$'s payoff functions, all her actions will be played
  with probability $\frac{1}{n}$. Thus, player $B$'s expected payoff is maximized when
  choosing $s_B(s_A^1,\theta_B)$, yielding her an expected payoff of $\frac{\mu_1}{n}$.
  As a result, the expected social welfare in equilibrium is  $\frac{n}{n+1}+\frac{\mu_1}{n}$.

  To compute the optimal social welfare note that if player A select $s_A^1$ with probability $1$,
  her expected payoff is $\frac{1}{2}$ and the expected
  payoff of player $B$ is $\mu_1$ and so the social welfare is $\frac{1}{2}+\mu_1$, which is greater than
   $\frac{n}{n+1}+\frac{\mu_1}{n}$ if $\mu_1\geq \frac{1}{2}$.
  The price of anarchy is  $\left(\frac{1}{2}+\mu_1\right)/\left( \frac{n}{n+1}+\frac{\mu_1}{n}\right)$,
  and in the limit as $n\rightarrow \infty$, the PoA becomes $\frac{1}{2}+\mu_1$.
  Notice that the PoA is an increasing function of $\mu_1$.
\end{example}

\noindent
Examples \ref{ex:1b} and \ref{ex:1} illustrates that, when no transfers are allowed, the price of anarchy can be arbitrarily large as
player $B$ 's payoff is large compare with the payoff of player $A$. We now generalize this idea to quantify the price of anarchy in one-way games.
\begin{proposition}\label{lem:poa}
  In one-way games, the price of anarchy when no payments are allowed
  satisfies, for any type $\theta$,
\[
\frac{\max_{s\in \mathcal{S}}u_{B}(s,\theta_B)}{\max_{s\in \mathcal{S}}u_{A}(s,\theta_A)+u_B(s^N(\theta),\theta_B)}
 \leq PoA(\theta) \leq 1+\frac{\max_{s\in \mathcal{S}}u_{B}(s,\theta_B)}{\max_{s\in \mathcal{S}}u_{A}(s,\theta_A)},
\]
\end{proposition}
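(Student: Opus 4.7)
The plan is to obtain both bounds by separately estimating the numerator $\max_{s\in\mathcal{S}} SW(s,\theta)$ and the denominator $\min_{s\in s^N(\theta)} SW(s,\theta)$ of $PoA(\theta)$, exploiting two structural facts: first, because $u_A$ depends only on $s_A$, $\max_{s\in\mathcal{S}} u_A(s,\theta_A)=\max_{s_A\in\mathcal{S}_A} u_A(s_A,\theta_A)$, and second, all payoffs are non-negative since $u_i$ maps into $\mathbb{R}^+$. These two facts alone will do essentially all the work.

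For the upper bound, I would first split the max of a sum,
\[
\max_{s\in\mathcal{S}} SW(s,\theta) \;\leq\; \max_{s\in\mathcal{S}} u_A(s,\theta_A) + \max_{s\in\mathcal{S}} u_B(s,\theta_B).
\]
For the denominator, any $s^N\in s^N(\theta)$ satisfies $s^N_A=\argmax_{s_A} u_A(s_A,\theta_A)$ by the definition of the Nash response when no transfers are allowed, so $SW(s^N,\theta)\geq \max_{s\in\mathcal{S}} u_A(s,\theta_A)$ because $u_B\geq 0$. Dividing gives the claimed bound $1+\frac{\max_s u_B(s,\theta_B)}{\max_s u_A(s,\theta_A)}$.

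For the lower bound, I would observe that since $u_A\geq 0$, taking any profile that maximizes $u_B$ yields $\max_{s\in\mathcal{S}} SW(s,\theta) \geq \max_{s\in\mathcal{S}} u_B(s,\theta_B)$. For the denominator, at the (worst) Nash equilibrium, using $u_A(s^N_A,\theta_A)=\max_{s\in\mathcal{S}} u_A(s,\theta_A)$ gives $SW(s^N(\theta),\theta) = \max_{s\in\mathcal{S}} u_A(s,\theta_A) + u_B(s^N(\theta),\theta_B)$, which after division yields the required lower bound.

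The only subtlety I anticipate is the interpretation of $u_B(s^N(\theta),\theta_B)$ when $s^N(\theta)$ is a set of equilibria, since the denominator of $PoA(\theta)$ takes a minimum over that set; the natural reading is that the expression refers to the equilibrium achieving that minimum, in which case both directions follow. Beyond this bookkeeping point the proof is genuinely short: no combinatorial or probabilistic machinery is needed, because the one-way structure pins down $u_A$'s contribution to its maximum value in every equilibrium, and non-negativity handles the remaining inequalities on $u_B$.
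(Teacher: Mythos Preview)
Your proposal is correct and follows essentially the same argument as the paper's proof: both use that $u_A(s^N(\theta),\theta_A)=\max_{s}u_A(s,\theta_A)$ from the one-way structure, bound the numerator by $\overline{u}_A+\overline{u}_B$ above and by $\overline{u}_B$ below, and bound the denominator by $\overline{u}_A$ below (via non-negativity of $u_B$) while writing it exactly as $\overline{u}_A+u_B(s^N(\theta),\theta_B)$ for the lower bound. Your remark on the set-valued $s^N(\theta)$ is the right reading, and the paper handles it the same way implicitly.
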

\begin{proof}
  Let $\overline{u}_i(\theta_i)=\max_{s\in
    \mathcal{S}}u_{i}(s,\theta_i)$, $i\in \{A,B\}$.  Independence of
  player $A$ implies that, for all $\theta\in \Theta$, her payoff is
  $u_A(s^N(\theta),\theta_A) = \overline{u}_A(\theta_A)$. It follows that
\begin{eqnarray*}
\frac{\max \{\overline{u}_A(\theta_A), \overline{u}_B(\theta_B)\} }{\overline{u}_A(\theta_A)+u_B(s^N(\theta),\theta_B)} &
 \leq  & PoA(\theta) \\
  \leq  \frac{\overline{u}_A(\theta_A)+\overline{u}_B(\theta_B) }{\overline{u}_A(\theta_A)+u_B(s^N(\theta),\theta_B)}
 & \leq & \frac{\overline{u}_A(\theta_A)+\overline{u}_B(\theta_B)}{\overline{u}_A(\theta_A)}
  = 1+\frac{\overline{u}_B(\theta_B)}{\overline{u}_A(\theta_A)}.
\end{eqnarray*}
\end{proof}

\begin{sloppypar}
\noindent
The price of anarchy can thus be arbitrarily large.  When it is large
enough, Proposition \ref{lem:poa} indicates that
$\max_{s\in \mathcal{S}}u_{B}(s,\theta_B) \geq \max_{s\in
  \mathcal{S}}u_{A}(s,\theta_A) \geq u_B(s^N(\theta),\theta_B)$.
In this case, player $B$ has bargaining power to incentivize player
$A$ monetarily so that she moves from her equilibrium and cooperates
to overcome a bad social welfare. This paper explores this possibility by analyzing
the social welfare when side payments are allowed.
\end{sloppypar}

\section{Related Work}

Before moving to the main results, it is useful to discuss related
games.  One-way games may seem to resemble Stackelberg games with
their notions of leader and follower. The key difference however is
that, in one-way games, the leader does not depend on the action taken
by the follower. In addition, in one-way games, players do not have
complete information and moves are simultaneous. Jackson and Wilkie
(\citeyear{jackson2005endogenous}) studied one-way instances derived
from their more general framework of endogenous games.  However, they
tackled the problem from a different perspective and assumed complete
information (i.e., the player utilities are not private). Jackson and
Wilkie gave a characterization of the outcome when players make
binding offers of side payments, deriving the conditions under which a
new outcome becomes a Nash equilibrium or remains one. They analyzed a
subclass, called 'one sided externality', which is essentially a
one-way game but with complete information. They showed that the
efficient outcome is an equilibrium in this setting, supporting
Coase's claim that a polluter and his victim can reach an efficient
outcome. Under perfect information, the victim can determine the
minimal transfer necessary to support the efficient play.
Naturally, this result does not hold under incomplete information \citep{myerson1983efficient}.
In what follows, we design a bargaining mechanism that is able to cope with the incomplete information setting.

\section{Bayesian-Nash Mechanisms}

In this section, we consider a Bayesian-Nash setting with quasi-linear
preferences. Both players $A$ and $B$ have private utilities and
beliefs about the utilities of the other players. By the revelation
principle, we can restrict our attention to direct mechanisms which
implement a social choice function.  A social choice function in
quasi-linear environments takes the form of $f(\theta)=\left(
  k(\theta), t(\theta) \right)$ where, for every $\theta \in \Theta$,
$k(\theta)\in \mathcal{S}$ is the allocation function and $t_i(\theta)
\in \mathbb{R}$ represents a monetary transfer to agent $i$.  The main
objective of mechanism design is to implement a social choice function
that achieves near efficient allocations while respecting some
desirable properties. For completeness, we specify these key
properties.

\begin{defn}
\label{def-vcg}
A social choice function is ex-post efficient if, for all $\theta \in
\Theta$, we have
$ k(\theta) \in \argmax_{s \in \mathcal{S}}\sum_i u_i(s,\theta)$.
\end{defn}

\begin{defn}
A social choice function is budget-balanced (BB) if, for all $\theta \in
\Theta$, we have
$\sum_i t_i(\theta)=0$.
\end{defn}
\noindent
In other words, there are no net transfers out of the system or into the system.
Taken together, ex-post efficiency and budget-balance imply Pareto optimality.
An essential condition of any mechanism is to guarantee that agents report their true types.
The following property captures this notion when agents have prior beliefs
on the types of other agents.
\begin{defn}
  A social choice function is Bayes-Nash incentive compatible (IC) if for every player $i$:
\[
 \mathbb{E}_{\theta_{-i}|\theta_i} \left[ u_i(k(\theta_i,\theta_{-i}),\theta_i) + t_i(\theta_i,\theta_{-i}) \right] \geq
\mathbb{E}_{\theta_{-i}|\theta_i}  \left[  u_i(k(\hat{\theta}_i,\theta_i),\theta_i) + t_i(\hat{\theta}_i,\theta_{-i}) \right]
\]
where $\theta_i \in \Theta_i$ is the type of player $i$,
$\hat{\theta}_i$ is the type player $i$ reports, and
$\mathbb{E}_{\theta_{-i}|\theta_i} $ denotes player $i$'s expectation
over prior beliefs $\theta_{-i}$ of the types of other agents given
her own type $\theta_i$.
\end{defn}
\noindent
The most natural definition of individual-rationality (IR)
is \emph{interim} IR, which states that every agent type has non-negative
expected gains from participation.

\begin{defn}
  A social choice function is interim individual-rational if, for all types
  $\theta \in \Theta$, it satisfies
\[
 \mathbb{E}_{\theta_{-i}|\theta_i} \left[ u_i(k(\theta),\theta_i) + t_i(\theta) \right] \geq \overline{u}_i(\theta_i),
\]
where $\overline{u}_i(\theta_i)$ is the expected utility for non-participation.
\end{defn}

\noindent
In the context of one-way games, both players have positive outside
options that depend only in their types. In particular, the outside
options are given by the Nash equilibrium outcome under no side
payments.  For players $A$ and $B$, the expected utilities for
non-participation are
$\overline{u}_A(\theta_A)=u_A(s^N_A(\theta_A),\theta_A)$ and
$\overline{u}_B(\theta_B)=u_B(s^N(\theta),\theta_B)$ respectively.


\subsection{Impossibility Result}\label{impossibility_result_section}

This section shows that there exists no mechanism for one-way games
that is efficient and satisfies the traditional desirable
properties. The result is derived from the Myerson-Satterthwaite
(\citeyear{myerson1983efficient}) theorem, a seminal impossibility
result in mechanism design. The Myerson-Satterthwaite theorem considers a
bargaining game with two-sided private information and it states that,
for a bilateral trade setting, there exists no Bayes-Nash
incentive-compatible mechanism that is budget balanced, ex-post
efficient, and gives every agent type non-negative expected gains from
participation (i.e., ex interim individual rationality).

Our contribution is twofold: we present an impossibility
result for one-way games and we relate them with bargaining games, an
idea that we will further explore on the following sections.
We now formalize the impossibility result for one-way games.

Consider the Myerson-Satterthwaite bilateral bargaining setting.

\begin{defn}{Myerson-Satterthwaite bargaining game:}
\begin{enumerate}
  \item  A seller (player 1) owns an object for which her valuation is $v_1 \in V_1$,
  and a buyer (player 2) wants to buy the object at a valuation $v_2\in V_2$.

 \item Each player $i$ knows her valuation
 $v_i$ at the time of the bargaining and player 1 (resp. 2) has a
 probability density distribution $f_2(v_2)$ (resp. $f_1(v_1)$) for the other player's
 valuation.

 \item Both distributions are assumed to be continuous and positive on their domain,
 and the intersection of the domains is not empty.
\end{enumerate}
\end{defn}

By the revelation principle, we can restrict our attention to incentive-compatible
direct mechanisms.  A direct mechanism for bargaining games is
characterized by two functions: (1) a probability distribution
$\sigma: V_1\times V_2 \rightarrow [0,1]$ that specifies the probability that
the object is transferred from the seller to the buyer and (2) a monetary transfer
scheme $p: V_1\times V_2 \rightarrow \mathbb{R}^2$. In this setting, ex-post efficiency
is achieved if $\sigma(v_1,v_2) = 1$ when $v_1<v_2$, and $0$ otherwise.

Our result consists in showing that a mechanism ${\cal M}'$ for the
Myerson-Satterthwaite setting can be constructed using a mechanism
${\cal M}$ for a one-way game in such a way that, if ${\cal M}$ is
 efficient, individual-rational (IR), incentive compatible (IC), and budget-balanced (BB),
 then ${\cal M}'$ is efficient, IR,
IC, and BB. The Myerson-Satterthwaite impossibility theorem
states that such a mechanism ${\cal M}'$ cannot exist, which implies the following
impossibility result for one-way games.

\begin{theorem}
  There is no ex-post efficient, individually rational,
  incentive-compatible, and budget-balanced mechanism for one-way
  games.
\end{theorem}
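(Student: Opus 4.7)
The plan is to prove the theorem by reduction from the Myerson-Satterthwaite impossibility theorem, exactly along the lines anticipated just before the statement: I would embed any bilateral trade instance inside a one-way game so that any mechanism $\mathcal{M}$ satisfying the four desiderata on the one-way side yields, by restriction, a mechanism $\mathcal{M}'$ satisfying the four corresponding desiderata on the bargaining side, which the Myerson-Satterthwaite theorem rules out.

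For the embedding, I would fix a bilateral trade instance with continuous, overlapping valuation distributions (for concreteness, $v_1, v_2$ each uniform on $[0,1]$), set $\theta_A = v_1$ and $\theta_B = v_2$, let $\mathcal{S}_A = \{s_A^0, s_A^1\}$ encode ``keep'' and ``trade'' respectively, and take $\mathcal{S}_B$ to be a singleton so that $B$ has no real choice. I would then define $u_A(s_A^0, v_1) = v_1$ and $u_A(s_A^1, v_1) = 0$, so $A$'s payoff depends only on her own move, and $u_B((s_A^1, \cdot), v_2) = v_2$, $u_B((s_A^0, \cdot), v_2) = 0$. The social welfare then equals $v_1$ when $A$ keeps and $v_2$ when $A$ trades, so ex-post efficiency in the one-way game coincides with the bargaining efficiency criterion ``trade iff $v_2 > v_1$''. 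Under no payments, $A$'s dominant action is $s_A^0$, giving Nash outside options $\overline{u}_A(v_1) = v_1$ and $\overline{u}_B(v_2) = 0$, which are exactly the seller's and buyer's reservation payoffs in the Myerson-Satterthwaite model.

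Given a mechanism $\mathcal{M} = (k, t)$ for this one-way game, I would set $\sigma(v_1, v_2) = \Pr[k(v_1, v_2) = s_A^1]$ and $p_i(v_1, v_2) = t_i(v_1, v_2)$ to obtain a bargaining mechanism $\mathcal{M}'$. Since each agent's interim expected payoff under $\mathcal{M}$ coincides with her interim expected payoff under $\mathcal{M}'$, both on truthful reports and on misreports, the four properties transfer term-by-term: ex-post efficiency becomes ex-post bargaining efficiency; $\sum_i t_i = 0$ becomes $\sum_i p_i = 0$; Bayes-Nash incentive compatibility becomes Bayes-Nash incentive compatibility; and interim IR against $(\overline{u}_A, \overline{u}_B)$ becomes interim IR against $(v_1, 0)$. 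Applying Myerson-Satterthwaite to this single instance then contradicts the assumed existence of $\mathcal{M}$. The point requiring the most care, rather than a genuine obstacle, is matching the IR benchmarks: the paper's interim IR is defined against the no-payment Nash outcome, and the construction is rigged precisely so that this benchmark reduces to the bargaining reservation payoffs; once that is observed, the reduction is essentially mechanical.
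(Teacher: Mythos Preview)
Your proposal is correct and follows essentially the same approach as the paper: embed a Myerson--Satterthwaite bilateral-trade instance into a two-action one-way game with a singleton $\mathcal{S}_B$, identify ``keep/trade'' with $A$'s two actions so that efficiency, IR benchmarks, IC, and BB transfer verbatim, and then invoke Myerson--Satterthwaite. The only cosmetic differences are that you fix a concrete uniform instance (sufficient, since one counterexample is enough) and you allow $k$ to be randomized via $\sigma(v_1,v_2)=\Pr[k(v_1,v_2)=s_A^1]$, which is a harmless generalization of the paper's deterministic treatment.
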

\begin{proof}

  For any bargaining setting, consider the following transformation into a
  one-way game instance:
\[
\mathcal{S}_A = \{s_A^1, s_A^2\}, \mathcal{S}_B = \{s_B\},
\]
\[
\forall v_1 \in V_1: u_A(s_A^1,v_1) = v_1,\; u_A(s_A^2,v_1) = 0,
\]
\[
\forall v_2 \in V_2: u_B((s_A^1,s_B), v_2) = 0,\;  u_B((s_A^2, s_B),v_2) = v_2,
\]
where player types $(v_1,v_2) \in V_1\times V_2$ are drawn from
distribution $f_1\times f_2$.  Two possible outcomes may occur,
$(s_A^1,s_B)$ or $(s_{A}^2, s_B)$, with social welfare $v_1$ and $v_2$
respectively.

Let us assume ${\cal M}= (k,t)$ is a direct mechanism for one-way
games and that ${\cal M}$ is ex-post efficient, IR, IC, and BB.  We now construct a
mechanism ${\cal M}'= (\sigma,p)$, where $\sigma(v_1,v_2)$ is the
probability that the object is transferred from the seller to the
buyer and $p(v_1,v_2)$ is the payment of each player.  We define
${\cal M}'$ such that
  \[
  \sigma(v_{1},v_{2})=
  \begin{cases}
   0 & \text{if } k(v_1,v_2)=(s_{A}^{1},s_{B}),\\
   1 & \text{if } k(v_1,v_2)=(s_{A}^{2},s_{B}),
  \end{cases}
\]
and \[ p(v_1,v_2)=t(v_1,v_2). \]

   It remains to show that ${\cal M}'$ satisfies all the desired properties.
  An ex-post efficient mechanism ${\cal M}$ in the one-way instance
 satisfies
\[
k(v_{1},v_{2})=
\begin{cases}
(s_{A}^{1},s_{B}) & \text{if }v_{1}\geq v_{2},\\
(s_{A}^{2},s_{B}) & \text{if }v_{1}<v_{2}.
\end{cases}
\]
Therefore, $\sigma(v_{1},v_{2})$ will assign the object to the buyer
iff $v_{1}<v_{2}$.  That is, the player with the highest valuation
will always get the object, meeting the restriction of ex-post
efficiency.  The budget-balanced constraint in ${\cal M}$ implies that
$ p_1(v_1,v_2) + p_2(v_1,v_2) = 0$ for all possible valuations, so
${\cal M}'$ is budget-balanced.

The individual rationality property for ${\cal M}'$ comes from
noticing that the default strategy of player $A$ when no payments are
allowed is $s_A^1$ and the corresponding payoff is $v_1$.  Therefore,
the seller utility is guaranteed to be at least her valuation
$v_1$. Analogously, the buyer will not have a negative utility given
that $u_B((s_A^1,s_B), v_2) = 0$.

Incentive-compatibility is straightforward from definition. Assume
that ${\cal M}'$ is not incentive-compatible, then in mechanism ${\cal
  M}$, at least one player could benefit from reporting a false type.

Such a mechanism ${\cal M}'$ cannot exist since it contradicts
Myerson-Satterthwaite impossibility result, which concludes our proof.
\end{proof}

An immediate consequence of this result is that Bayesian-Nash
mechanisms can only achieve at most two of the three properties:
ex-post efficiency, individual-rationality, and budget balance. For
instance, VCG and dAGVA \citep {d1979incentives,arrow1977property} are
part of the Groves family of mechanisms that truthfully implement
social choice functions that are ex-post efficient. VCG has no
guarantee of budget balance, while dAGVA is not guaranteed to meet the
individual-rationality constraints.  We refer the reader to Williams
(\citeyear{williams1999characterization}) and Krishna and Perry
(\citeyear{krishna1998efficient}) for alternative derivations of the
impossibility result for bilateral trading under the Groves family of
mechanisms.

\section{Single-Offer Mechanism} \label{section_single_offer}

In this section, we propose a simple bargaining mechanism for player
$B$ to increase her payoff. The literature about bargaining games is
extensive and we refer readers to a broad review by Kennan and Wilson
(\citeyear{kennan1993bargaining}).

Given the nature of our applications, individual rationality imposes a
necessary constraint. Otherwise, player $A$ can always defect from
participating in the mechanism and achieve her maximal payoff
independently of the type of player $B$. Additionally, we search for
Bayesian-Nash mechanisms without subsidies, i.e., budget-balanced
mechanisms. The lack of a subsidiary in this case gives rise to a
decentralized mechanism that does not require a third agent to perform
the computations needed by the mechanism. However, a third party is
needed to ensure compliance with the agreement reached by both
players.

 An interesting starting point for
one-way games is the recognition that, whenever player $B$ has a
better payoff than $A$, player $A$ may let player $B$ play her optimal
strategy in exchange for money. The resulting outcome can be viewed as
swapping the roles of both players, i.e., player $B$ chooses her
optimal strategy and $A$ plays her best response to $B$'s strategy. In
this case, as in Proposition \ref{lem:poa}, the worst outcome would be
\[
1+\frac{\max_{s\in \mathcal{S}}u_{A}(s,\theta_A)}{\max_{s\in
    \mathcal{S}}u_{B}(s,\theta_B)}.
\]
This observation leads to the following lemma.

\begin{lemma}\label{lem:poa2}
Consider the social choice function that selects the best strategy
that maximizes the payoff of either player $A$ or player $B$, i.e.,
the strategy
\[
s'(\theta) =\argmax_{s\in \mathcal{S}}
\left( \max \left(u_A(s,\theta_A),u_B(s,\theta_B)\right) \right) .
\]
In the one-way game,
strategy $s'(\theta)$ has a price of anarchy of 2 (i.e., $ \forall \theta \; PoA(\theta)=2$).
\end{lemma}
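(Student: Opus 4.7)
The plan is to sandwich both the social welfare under $s'(\theta)$ and the optimal social welfare between $\max(\overline{u}_A(\theta_A),\overline{u}_B(\theta_B))$ and $\overline{u}_A(\theta_A)+\overline{u}_B(\theta_B)$, where $\overline{u}_A(\theta_A)=\max_{s\in\mathcal{S}}u_A(s,\theta_A)$ and $\overline{u}_B(\theta_B)=\max_{s\in\mathcal{S}}u_B(s,\theta_B)$, using the same kind of argument as in Proposition~\ref{lem:poa}. The elementary inequality $a+b\leq 2\max(a,b)$ for non-negative $a,b$ then yields the factor of two.

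First I would establish the upper bound
\[
\max_{s\in\mathcal{S}}SW(s,\theta)\leq \overline{u}_A(\theta_A)+\overline{u}_B(\theta_B),
\]
which is immediate from the definition of $SW$ and the non-negativity of payoffs. Next, I would establish the lower bound
\[
SW(s'(\theta),\theta)\geq \max\bigl(\overline{u}_A(\theta_A),\overline{u}_B(\theta_B)\bigr).
\]
By the definition of $s'$, the value of $\max(u_A(s,\theta_A),u_B(s,\theta_B))$ evaluated at $s'(\theta)$ equals $\max(\overline{u}_A(\theta_A),\overline{u}_B(\theta_B))$: either $s'_A$ is a maximizer of $u_A$ (and the one-way structure makes this well-defined irrespective of $s'_B$), or the pair $(s'_A,s'_B)$ attains $\overline{u}_B(\theta_B)$. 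In either case the other player's contribution is non-negative, so $SW(s'(\theta),\theta)$ is at least this common value. Dividing the two bounds gives $PoA(\theta)\leq 2$.

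For tightness, I would exhibit a simple two-action instance: set $u_A(s_A^1)=1$ with $u_B((s_A^1,s_B))=0$ for every $s_B$, and $u_A(s_A^2)=1-\varepsilon$ with some $s_B^\star$ giving $u_B((s_A^2,s_B^\star))=1$. Then $s_A^1$ lies in the $\argmax$ set defining $s'(\theta)$, so a worst-case tie-breaker picks it, yielding $SW(s'(\theta),\theta)=1$, while the social optimum is $2-\varepsilon$; the ratio approaches $2$. In the limit this shows the bound is tight for every $\theta$ under adversarial tie-breaking, justifying the equality in the statement.

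The main subtlety is precisely this tie-breaking in the $\argmax$ defining $s'$: the $\argmax$ set can contain profiles whose social welfares differ significantly, and the lower bound only uses the value $\max(\overline{u}_A,\overline{u}_B)$ that the definition guarantees. This is exactly what aligns the worst case of the mechanism with the tightness example and produces equality rather than a strict inequality. Beyond this, the argument reduces to two non-negativity bounds and I do not foresee any further obstacle.
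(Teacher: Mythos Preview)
The paper does not give a formal proof of this lemma; the intended justification is the short paragraph immediately preceding it, which observes (by symmetry with Proposition~\ref{lem:poa}) that if the roles are swapped so that player $B$ gets her maximum, the worst-case ratio is $1+\overline{u}_A(\theta_A)/\overline{u}_B(\theta_B)$, and hence choosing whichever player has the larger maximum payoff caps the ratio at $2$. Your sandwich argument---$\max(\overline{u}_A,\overline{u}_B)\le SW(s'(\theta),\theta)$ and $\max_s SW(s,\theta)\le \overline{u}_A+\overline{u}_B$, combined with $a+b\le 2\max(a,b)$---is exactly the same reasoning written out explicitly, so your approach matches the paper's.

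One remark: the literal claim ``$\forall\theta\;PoA(\theta)=2$'' in the lemma is an abuse of notation (for many $\theta$ the ratio is strictly below $2$); what is meant is that the bound $2$ holds for every $\theta$ and is tight in the worst case. You correctly read the statement this way and supply a limiting instance with adversarial tie-breaking to establish tightness, which the paper omits. Your handling of the tie-breaking subtlety in the $\argmax$ is appropriate and consistent with the paper's convention that ties are broken arbitrarily.
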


\noindent
Unfortunately, this social choice function cannot be implemented in
dominant strategies without violating individual rationality. Player
$A$ may have a smaller payoff by following strategy $s'$ instead of
the Nash equilibrium strategy $s^N$.   Indeed, when
$SW(s',\theta) < SW(s^N,\theta)$, it must be that at least one of
the players will be worse than playing the Nash equilibrium strategy
$s^N$.
Lemma \ref{lem:poa2} however
gives us hope for designing a budget-balanced mechanism that has a
constant price of anarchy. Indeed, a simple and distributed
implementation would ask player $B$ to propose an action to be
implemented and player $A$ would receive a monetary compensation for
deviating from her maximal strategy.

We now present such a distributed implementation based on a bargaining
mechanism. The mechanism is inspired by the model of two-person
bargaining under incomplete information presented by Chatterjee and
Samuelson (\citeyear{chatterjee1983bargaining}). In their model, both
the seller and the buyer submit sealed offers and a trade occurs if
there is a gap in the bids. The price is then set to be a convex
combination of the bids. Our single-offer mechanism adapts this idea
to one-way games. In particular, to counteract player $A$'s advantage,
player $B$ makes the first and final offer. Moreover, the structure of
our mechanism makes it possible to quantify the price of anarchy and
provide quality guarantee on the mechanism outcome. Our single-offer
mechanism is defined as follows:

\begin{enumerate}
\item  Player B selects an action $s_A \in \mathcal{S}_A$ to propose to player $A$.
\item Player $B$ also computes her outside option $s_B^O(s_A, \theta_B)$ in case player $A$ rejects action $s_A$,  
and we denote by $u_B^O(s_A, \theta_B)$ the expected payoff from her outside option.
 \item Player $B$ proposes a monetary value of $\gamma \cdot \Delta_B(s_A,\theta_B)$
  with $\Delta_B(s_A,\theta_B)=u_B(s_B(s_A,\theta_B),\theta_B) - u_B^O(s_A, \theta_B)$
     and $\gamma \in \mathbb{R}_{[0,1]}$ to player $A$ in the hope
     that she accepts to play strategy $s_A$ instead of strategy
     $s^N_A$.
\item Player $A$ decides whether to accept the offer.
\item If player $A$ accepts the offer, the outcome of the game is  $\left( s_A, s_B(s_A,\theta_B) \right)$.
Otherwise the outcome of the game is the outside option  $\left(s^N_A(\theta_A), s_B^O(s_A, \theta_B)\right)$.
\end{enumerate}

It is worth observing that a broker is required in this mechanism to
ensure that the outcome $\left(s^N_A(\theta_A), s_B^O(s_A, \theta_B)\right)$
is implemented if player A rejects the unique offer, and no counteroffers are made. A key feature of the
single-offer mechanism is that it requires a minimum amount of information from player A
(i.e., whether she accepts or rejects the offer).

To derive the equilibrium strategy for the single-offer mechanism, we assume players are expected utility maximizers.
The parameter $\gamma \in \mathbb{R}_{[0,1]}$ has been chosen so that
player $B$, satisfying individual rationality, never offers more than $\Delta_B(s_A,\theta_B)$ and her
payoff is never worse than her expected outside option $u_B^O(s_A, \theta_B)$.
Whereas the mechanism can only guarantee interim individual rationality for player $B$, it
provides ex-post individual rationality for player $A$, as shown in the following proposition.

\begin{proposition}\label{prop:accept}
  If players $A$ and $B$ play the single-offer mechanism, for any $(\theta_A,\theta_B) \in \Theta$, player $A$
  accepts the offer $(\gamma,s_A)$ whenever $$u_A(s_A,\theta_A) + \gamma \cdot \Delta_B(s_A,\theta_B) \geq
  u_A(s_A^N(\theta_A),\theta_A).$$	
\end{proposition}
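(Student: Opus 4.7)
The plan is to verify this by a direct comparison of the two deterministic payoffs player $A$ faces at the moment she must decide. Player $A$ is an expected utility maximizer, so her acceptance rule is simply: accept when the expected utility from accepting weakly exceeds the expected utility from rejecting.

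First I would compute player $A$'s payoff when she accepts the offer $(\gamma, s_A)$. In this branch of the mechanism, the realized outcome is $(s_A, s_B(s_A,\theta_B))$ together with the monetary transfer $\gamma \cdot \Delta_B(s_A, \theta_B)$ from player $B$ to player $A$. By the one-way property $u_A((s_A,s_B),\theta_A)=u_A(s_A,\theta_A)$ for every $s_B$, so player $A$'s accept-payoff is exactly
\[
u_A(s_A,\theta_A) + \gamma \cdot \Delta_B(s_A, \theta_B),
\]
which is a deterministic quantity from her point of view once the offer has been made (both $s_A$ and the scalar $\gamma \cdot \Delta_B(s_A,\theta_B)$ are observed, and neither depends on $\theta_B$ in a way that leaves residual uncertainty relevant to $A$'s own payoff).

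Next I would compute player $A$'s payoff when she rejects. By step 5 of the mechanism, the outcome then reverts to the outside option $(s_A^N(\theta_A), s_B^O(s_A,\theta_B))$. Again invoking the one-way property, player $A$'s payoff in this branch is $u_A(s_A^N(\theta_A),\theta_A)$, independent of what $B$ plays as her outside strategy. Consequently, the acceptance rule of an expected utility maximizer reduces to the inequality stated in the proposition, and $A$ accepts whenever
\[
u_A(s_A,\theta_A) + \gamma \cdot \Delta_B(s_A,\theta_B) \;\geq\; u_A(s_A^N(\theta_A),\theta_A).
\]

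There is essentially no hard step: the one-way structure eliminates any dependence of $A$'s payoff on $B$'s action in either branch, so no expectation over $\theta_B$ is needed from $A$'s perspective and the comparison is immediate. The only thing I would be slightly careful about is articulating that even though $\gamma \cdot \Delta_B(s_A,\theta_B)$ is computed from $B$'s private type, the scalar value is conveyed to $A$ together with the proposed action $s_A$, so $A$ treats it as a known constant when deciding, which is what makes the acceptance criterion hold ex-post rather than merely in expectation.
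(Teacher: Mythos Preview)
Your argument is correct and is precisely the direct payoff comparison the paper has in mind; in fact the paper states this proposition without proof, treating it as immediate from the mechanism description and the one-way structure. Your write-up simply spells out that implicit reasoning, so there is nothing to add.
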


\noindent
In case player $A$ rejects the offer $(\gamma,s_A)$, she will choose her utility maximizing action $s_A^N(\theta_A)$ as
her outside option. Note that by Proposition \ref{prop:accept}, if $s_A=s_A^N(\theta_A)$,  player $A$
would never reject the proposed action $s_A$. Accordingly, if proposed action $s_A$ is rejected then
$s_A\not=s^N_A(\theta_A)$. This observation leads to the following proposition.

\begin{proposition}\label{prop:outside}
For every task $s \in \mathcal{S}_A$, let $\Theta^{s}_A = \Theta_A \setminus \{\theta_A \in \Theta_A: s = \argmax_{x \in \mathcal{S}_A} u_A(x,\theta_A)\}$. In the single-offer mechanism, player $B$ will pick outside option $s^O_B(s_A, \theta_B)$ such that her expected payoff
is maximized, i.e.,
\[
s_B^O(s_A, \theta_B) =  \argmax_{s_B \in \mathcal{S}_B} \mathbb{E}_{\theta^{s_A}_A} \left[ u_B\left((s^N_A(\theta_A),s_B), \theta_B \right)\right].
\]
\end{proposition}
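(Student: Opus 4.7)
The plan is to derive the outside-option formula by (i) analyzing when rejection can occur, so as to identify the event on which $B$ updates her belief about $\theta_A$, and (ii) maximizing $B$'s expected payoff under this updated belief. Since $B$ commits to $s_B^O$ before learning whether $A$ accepts, she must reason about $A$'s type conditional on the rejection event.

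First I would apply Proposition \ref{prop:accept}. Since $\gamma \in \mathbb{R}_{[0,1]}$ and, by the construction of the mechanism, $\Delta_B(s_A,\theta_B)\geq 0$, the monetary compensation $\gamma\cdot\Delta_B(s_A,\theta_B)$ is nonnegative. If $s_A = s_A^N(\theta_A)$, then $u_A(s_A,\theta_A) = u_A(s_A^N(\theta_A),\theta_A)$ and the acceptance inequality $u_A(s_A,\theta_A)+\gamma\cdot\Delta_B(s_A,\theta_B)\geq u_A(s_A^N(\theta_A),\theta_A)$ collapses to $\gamma\cdot\Delta_B(s_A,\theta_B)\geq 0$, which always holds. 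Hence $A$ accepts whenever the proposed action coincides with her Nash action, and contrapositively, observing rejection implies $\theta_A\in\Theta^{s_A}_A$.

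Next I would invoke the one-way structure: when $A$ rejects, she reverts to $s_A^N(\theta_A)$, an action that, by the one-way property, is unaffected by $B$'s choice of $s_B$. Hence $B$'s payoff from choosing outside action $s_B$ is the random variable $u_B((s_A^N(\theta_A),s_B),\theta_B)$ whose randomness stems solely from $\theta_A$. Because $B$ is a risk-neutral expected-utility maximizer and her posterior for $\theta_A$ is the prior restricted to $\Theta^{s_A}_A$, she picks the $s_B$ maximizing $\mathbb{E}_{\theta^{s_A}_A}\bigl[u_B((s_A^N(\theta_A),s_B),\theta_B)\bigr]$, which is exactly the claimed expression.

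The main subtlety I anticipate is justifying why the posterior is supported on $\Theta^{s_A}_A$ rather than on the sharper (exact) rejection region, since some types in $\Theta^{s_A}_A$ may still accept if $\gamma\cdot\Delta_B(s_A,\theta_B)$ is large enough. Refining the posterior to the true rejection region would require $B$ to know $\gamma\cdot\Delta_B(s_A,\theta_B)$ before choosing $s_B^O$, but $\Delta_B$ depends on $u_B^O$ and hence on $s_B^O$ itself, creating a circular dependence. Conditioning on the necessary event $\Theta^{s_A}_A$ breaks this circularity and yields a consistent, well-defined outside option, which is the content of the proposition.
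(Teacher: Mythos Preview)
Your proposal is correct and mirrors the paper's own reasoning, which is given informally in the paragraph immediately preceding the proposition rather than in a separate proof: the paper simply notes that, by Proposition~\ref{prop:accept}, $s_A=s_A^N(\theta_A)$ forces acceptance, so rejection reveals $\theta_A\in\Theta_A^{s_A}$, and the formula follows. Your discussion of the circular dependence between $s_B^O$ and $\Delta_B$ (hence $\gamma\cdot\Delta_B$) to justify conditioning on the coarse event $\Theta_A^{s_A}$ rather than the exact rejection set is a point the paper leaves implicit; it is a useful clarification but not a different approach.
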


\setcounter{ex}{0}
\begin{example}(continued)
\label{ex:2b}
\normalfont The payoff of Player $B$ is higher if action $s_A^1$ is
played by player $A$.  Hence, player $B$ has incentives to submit an
offer $c$ that triggers action $s_A^1$. Player $A$ accepts the offer if
$c+u_{A}(s_A^1)\ge u_{A}(s_A^2)=100$. Given that $u_{A}(s_A^1)$
follows a uniform distribution, the probability that player $A$
accepts the offer is $\frac{c}{100}$ if $c\leq100$ and $1$ otherwise. For player $B$, this offer has an expected
payoff of $\frac{c}{100}\cdot\left(x-c\right)$ if $c\leq100$ and $x-c$ otherwise. The
optimal value for the offer is given by $c^{*}=\frac{x}{2}$  if $x\leq200$ and
$c^{*}=100$  if $x>200$. This
leads to an expected social welfare for the single-offer
mechanism of
\[
SW=\begin{cases}
100+x\left(\frac{x}{200}-\frac{1}{4}\right) & \mbox{if }x\leq200,\\
50+x & \mbox{if }x>200
\end{cases}
\]
Recall that the optimal social welfare is $50+x$ if
$x\geq50$ and $100$ otherwise. Therefore, the mechanism has a price of
anarchy,
\[
PoA=\begin{cases}
\frac{100}{100+x\left(\frac{x}{200}-\frac{1}{4}\right)} & \mbox{ if }x\leq50,\\
\frac{50+x}{100+x\left(\frac{x}{200}-\frac{1}{4}\right)} & \mbox{ if }50\leq x\leq200,\\
\frac{50+x}{50+x} & \mbox{ if }200\leq x.
\end{cases}
\]
The PoA is bounded by a constant and in fact, $PoA \leq 1.21$ for any $x$.
This contrasts with the unbounded PoA obtained when no side payments are
allowed.
\end{example}
\begin{example}(continued)
\label{ex:2}
\normalfont
The payoff of player $B$ is higher if action $s_A^1$ is
played by player $A$.  Hence, player $B$ has an incentive to submit a monetary
offer $c\leq 1$ that triggers action $s_A^1$. Player $A$ accepts the offer if
$c+u_{A}(s_A^1)\geq \max_{s_A\in\mathcal{S}_A} u_{A}(s_A)$.
It can be shown that the probability that player $A$ accepts the offer is $\frac{c n-c^n}{n-1}$.
In case of acceptance, the expected payoff is $\mu_1-c$ for player $B$
 and $\frac{1}{2}+c$ for player $A$. In case of rejection, it is guaranteed that player A will not play $s_A^1$ and hence
 player $B$'s outside option is action $s_B(s_A^2,\theta_B)$ with an expected payoff of $\frac{\mu_2}{n-1}$.
Player $A$'s expected outside option is $\frac{n}{n+1}$, corresponding to her expected maximum payoff derived in Example \ref{ex:1}.
As a result, player $B$ by offering $c$, has an expected payoff of
 \[
 \frac{c n-c^n}{n-1}\left(\mu_1-c \right)+\left( 1-\frac{c n-c^n}{n-1} \right)\frac{\mu_2}{n+1}.
 \]
 When n is large, the probability of acceptance is approximately,
\[
\lim_{n\rightarrow \infty}\frac{c n-c^n}{n-1}= c.
\]
 Accordingly, in case of rejection, the expected payoffs become $\lim_{n\rightarrow \infty} \frac{\mu_2}{n-1}=0$ for player $B$, and $\lim_{n\rightarrow \infty} \frac{n}{n+1}=1$ for player $A$.
Player $B$'s expected payoff is thus $c \left(\mu_1-c \right)$ and is
maximized when she offers
$c^{*}=\frac{\mu_1}{2}$ if $\mu_1\leq 2$ and $c^{*}=1$ otherwise.
This leads to an expected social welfare for the single-offer mechanism of
$SW= c^*(\mu_1+\frac{1}{2}) + (1-c^*)(0+1)$.
Recall that the optimal social welfare is $\frac{1}{2}+\mu_1$ if $\mu_1 \geq \frac{1}{2}$ and $1$ otherwise.
Therefore the mechanism has the following price of anarchy,
\[
PoA=\begin{cases}
\dfrac{1}{\frac{\mu_{1}}{2}(\mu_{1}+\frac{1}{2})+(1-\frac{\mu_{1}}{2})} & \mbox{ if }\mu_{1}\leq\frac{1}{2},\\
\dfrac{\frac{1}{2}+\mu_{1}}{\frac{\mu_{1}}{2}(\mu_{1}+\frac{1}{2})+(1-\frac{\mu_{1}}{2})} & \mbox{ if }\frac{1}{2}\leq\mu_{1}\leq2,\\
\dfrac{\frac{1}{2}+\mu_{1}}{\frac{1}{2}+\mu_{1}}=1 & \mbox{ if }2\leq\mu_{1},
\end{cases}
\]
and the PoA has a maximum value of $\frac{4}{31} \left(3+2 \sqrt{10}\right) \approx 1.203$.
This contrasts with the unbounded PoA obtained by the Nash equilibrium when no side payments are
allowed.
\end{example}

We now generalize the analysis done in Examples \ref{ex:2b} and \ref{ex:2}.
We proceed by studying the utility-maximizing strategy $(s_A , \gamma)$ for
player $B$ and then derive the expected social welfare of the outcome
for the single-offer mechanism.
 Note that, in case of agreement, the
action of player $B$ of type $\theta_B$ is solely defined by $s_A$ as
she has no incentives to defect from its best response
$s_B(s_A,\theta_B)$.  By Proposition \ref{prop:accept}, player $A$
accepts an offer whenever $ \Delta_A(s_A,\theta_A)\leq \gamma
\Delta_B(s_A,\theta_B)$, where $\Delta_A(s_A,\theta_A) =
u_A(s_A^N(\theta_A),\theta_A)-u_A(s_A,\theta_A)$.  Player B obviously
aims at choosing $\gamma$ and $s_A$ to maximize her payoff and we now
study this optimization problem.  In the case of an agreement, player
$B$ is left with a profit of
$$u_B(s_B(s_A,\theta_B),\theta_B)-\gamma \cdot \Delta_B(s_A,\theta_B).$$
Otherwise, player $B$ gets an expected payoff of $u_B^O(s_A, \theta_B)$.

\begin{defn}\label{def:dist}
The probability that player $A$ accepts the offer $(s_A , \gamma)$, given that player $B$ has type $\theta_B \in \Theta_B$, is
\[
P(s_A,\gamma,\theta_B) = \Pr\left[ \gamma \cdot \Delta_B(s_A,\theta_B) \geq \Delta_A(s_A,\theta_A) \right]
= \int_{\theta_A \in \Theta_A} f_A(\theta_A)\cdot \delta(s_A, \gamma \cdot \Delta_B(s_A,\theta_B), \theta_A) d\theta_A,
\]
with
\[
\delta(s_A,x, \theta_A)=
\begin{cases}
1 & \text{ if } x \geq \Delta_A(s_A,\theta_A),\\
0 & \text{ otherwise. }
\end{cases}
\]
\end{defn}
\noindent
The expected profit of players $A$ and $B$ for proposed action
  $s=(s_A,s_B)$ and $\gamma$ when player $B$ has type $\theta_B$ is
  given by
\begin{eqnarray*}
\mathbb{E}_{\theta_A} \left[U_B(s_A,\gamma,\theta_B)\right] & = & u_B^O(s_A, \theta_B) +
 P(s_A,\gamma,\theta_B) \left( (1 - \gamma) \cdot \Delta_B(s_A,\theta_B) \right),\\
\mathbb{E}_{\theta_A} \left[U_A(s_A,\gamma,\theta_B)\right] & = & \mathbb{E}_{\theta_A} [ u_A^N(\theta_A)] +  P(s_A,\gamma,\theta_B)\cdot
 \left( \gamma \cdot \Delta_B(s_A,\theta_B)- \mathbb{E}_{\theta_A} \left[\Delta_A(s_A,\theta_A) \right] \right).
\end{eqnarray*}

\noindent
The optimal strategy of player $B$ is specified in the following
lemma.

\begin{lemma}\label{lem:opt-gamma}
On the single-offer mechanism, player $B$ chooses $s_A^*(\theta_B)$ and $\gamma^*(s_A^*,\theta_B)$ such that
\[
s_A^*(\theta_B) =  \argmax_{s_A  \in \mathcal{S}_A}
\mathbb{E}_{\theta_A} \left[U_B(s_A,\gamma^*,\theta_B)\right] ,
\]
where
\[
 \gamma^*(s_A,\theta_B) = \argmax_{\gamma \in \mathbb{R}_{[0,1]}} P(s_A,\gamma,\theta_B)  \cdot  (1 - \gamma).
\]
\end{lemma}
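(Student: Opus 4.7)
The plan is to exploit the fact that player $B$'s expected utility, written as given just before the Lemma, decomposes additively into a term that does not depend on $\gamma$ and a single $\gamma$-dependent factor. Concretely, I would start from
\[
\mathbb{E}_{\theta_A}\!\left[U_B(s_A,\gamma,\theta_B)\right] \;=\; u_B^O(s_A,\theta_B) \;+\; P(s_A,\gamma,\theta_B)\,(1-\gamma)\,\Delta_B(s_A,\theta_B),
\]
and note that, for any fixed $(s_A,\theta_B)$, both $u_B^O(s_A,\theta_B)$ and $\Delta_B(s_A,\theta_B)$ are constants with respect to $\gamma$. Therefore choosing the optimal $\gamma$ for a given $s_A$ reduces to optimizing the scalar function $\gamma\mapsto P(s_A,\gamma,\theta_B)(1-\gamma)$, which is exactly the definition of $\gamma^*(s_A,\theta_B)$ stated in the Lemma, provided the sign of $\Delta_B(s_A,\theta_B)$ is non-negative.

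Next I would invoke the standard ``nested maximization'' principle: since the joint maximum over $(s_A,\gamma)$ equals the maximum over $s_A$ of the inner maxima over $\gamma$, substituting $\gamma^*(s_A,\theta_B)$ back into the expected utility reduces player $B$'s problem to a single maximization over $s_A\in\mathcal{S}_A$, which is precisely the $s_A^*(\theta_B)$ claimed in the Lemma. This step is essentially bookkeeping.

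The one delicate point is the case $\Delta_B(s_A,\theta_B)<0$, where maximizing $P(1-\gamma)$ would actually \emph{decrease} $B$'s payoff; here the true optimum is attained by making $P(1-\gamma)=0$ (for instance by taking $\gamma=1$), yielding $u_B^O(s_A,\theta_B)$. I would argue that such an $s_A$ is never picked at the outer stage: $B$ can always replicate the outside-option payoff by proposing any other action with $\gamma$ chosen to make $P(1-\gamma)\Delta_B$ vanish, so an $s_A$ with strictly negative $\Delta_B$ is weakly dominated. Hence the outer $\argmax$ selects an $s_A^*$ with $\Delta_B(s_A^*,\theta_B)\ge 0$, and on this restricted domain the formula for $\gamma^*$ is exactly the correct inner optimizer.

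I expect the only real obstacle to be this sign issue around $\Delta_B$; the rest of the proof is a one-line application of the fact that a joint maximum can be computed by nesting maximizations, combined with the observation that $\gamma$ enters the objective only through the factor $P(s_A,\gamma,\theta_B)(1-\gamma)$.
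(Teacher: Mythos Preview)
The paper states this lemma without proof, treating it as immediate from the expected-utility formula displayed just before it; your argument is exactly the natural justification that formula invites---factor out the $\gamma$-independent terms and apply nested maximization. Your handling of the sign of $\Delta_B$ is a careful addition the paper simply elides (it implicitly assumes $\Delta_B\ge 0$, cf.\ the sentence ``player $B$ \ldots\ never offers more than $\Delta_B(s_A,\theta_B)$ and her payoff is never worse than her expected outside option''), so your proposal is correct and aligned with the paper's intended reasoning.
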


\subsection{Price of Anarchy}
 \label{section_price_of_anarchy}


We now analyze the quality of the outcomes in the single-offer
mechanism.
The first step is the derivation of a lower bound for the expected
social welfare of the single-offer mechanism. Inspired by Lemma
\ref{lem:poa2}, instead of considering all pairs $\langle s_A, \gamma
\rangle$, the analysis restricts attention to a single action $s'_A =
\argmax_{s_A \in \mathcal{S}_A} u_B(s_B(s_A,\theta_B),\theta_B)$.  We
prove that, when offering to player $A$ action $s'_A$ and its
associated optimal value for $\gamma$, the expected social welfare is
lower than the optimal pair $\langle s_A^*, \gamma^* \rangle$. As a
result, we obtain an upper bound to the price of anarchy of the
single-offer mechanism.

To make the discussion precise, consider the strategy where player B offers
$\langle s_A', \gamma^*(s'_A,\theta_B) \rangle$,
with  $\gamma^*(s'_A,\theta_B)$ being the optimal choice of $\gamma$ given
$s'_A$, following the notation used in Lemma \ref{lem:opt-gamma}.

\begin{lemma}\label{lem:upper_bound}
For any type $\theta_B \in \Theta_B$ of player $B$, the expected social welfare achieved by the single-offer mechanism
  is at least the expected social welfare achieved by the strategy
$\langle s_A', \gamma^*(s'_A,\theta_B) \rangle$.
\end{lemma}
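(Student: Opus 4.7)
My plan is to leverage player $B$'s payoff optimality from Lemma \ref{lem:opt-gamma}. The single-offer mechanism implements the pair $(s_A^*(\theta_B), \gamma^*(s_A^*, \theta_B))$ that maximizes $\mathbb{E}_{\theta_A}[U_B(s_A, \gamma^*(s_A, \theta_B), \theta_B)]$ over $s_A \in \mathcal{S}_A$, and $(s_A', \gamma^*(s_A', \theta_B))$ is a specific feasible alternative, so B's optimality immediately gives
\[
\mathbb{E}_{\theta_A}[U_B(s_A^*, \gamma^*(s_A^*, \theta_B), \theta_B)] \geq \mathbb{E}_{\theta_A}[U_B(s_A', \gamma^*(s_A', \theta_B), \theta_B)].
\]

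To translate this payoff inequality into one about social welfare, I would decompose
\[
SW(s_A, \gamma, \theta_B) = \mathbb{E}_{\theta_A}[u_A^N(\theta_A)] + u_B^O(s_A, \theta_B) + \mathbb{E}_{\theta_A}\bigl[ \mathbb{1}_{\text{accept}} \bigl(\Delta_B(s_A, \theta_B) - \Delta_A(s_A, \theta_A)\bigr) \bigr],
\]
splitting expected welfare into a Nash baseline for player $A$, player $B$'s outside option, and the joint surplus contributed by accepted offers. Since A accepts exactly when $\gamma \Delta_B \geq \Delta_A$, each accepting type contributes at least $(1-\gamma)\Delta_B$ to the joint surplus, which yields the uniform lower bound $SW(s_A, \gamma, \theta_B) \geq \mathbb{E}_{\theta_A}[u_A^N(\theta_A)] + \mathbb{E}_{\theta_A}[U_B(s_A, \gamma, \theta_B)]$ for every $(s_A, \gamma)$.

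Applying this bound at the mechanism's strategy and chaining with B's optimality produces $SW(s_A^*, \gamma^*(s_A^*), \theta_B) \geq \mathbb{E}_{\theta_A}[u_A^N(\theta_A)] + \mathbb{E}_{\theta_A}[U_B(s_A', \gamma^*(s_A', \theta_B), \theta_B)]$. The main obstacle is to close the argument by showing this right-hand side dominates $SW(s_A', \gamma^*(s_A', \theta_B), \theta_B)$, which requires absorbing the information rent player $A$ extracts under the alternative strategy into the gap between B's optimal payoff and her payoff at the alternative. I expect to do this by exploiting the monopolist-style first-order condition from Lemma \ref{lem:opt-gamma} that defines $\gamma^*(s_A', \theta_B)$: at the optimizer of $P(s_A', \gamma)(1-\gamma)$, the marginal accepting type is essentially indifferent, which tightly constrains A's expected rent and permits a term-by-term comparison of the welfare components across the two strategies, completing the chain $SW(s_A^*, \gamma^*(s_A^*), \theta_B) \geq SW(s_A', \gamma^*(s_A', \theta_B), \theta_B)$.
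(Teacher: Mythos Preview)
Your decomposition and the uniform lower bound
\[
\mathbb{E}_{\theta_A}[SW(s_A,\gamma,\theta_B)] \;\geq\; \mathbb{E}_{\theta_A}[u_A^N(\theta_A)] + \mathbb{E}_{\theta_A}[U_B(s_A,\gamma,\theta_B)]
\]
are correct, and chaining with $B$'s optimality does give
\[
\mathbb{E}_{\theta_A}[SW(s_A^*,\gamma^*,\theta_B)] \;\geq\; \mathbb{E}_{\theta_A}[u_A^N(\theta_A)] + \mathbb{E}_{\theta_A}[U_B(s_A',\gamma',\theta_B)].
\]
The problem is the last step. From the same decomposition,
\[
\mathbb{E}_{\theta_A}[SW(s_A',\gamma',\theta_B)] \;=\; \mathbb{E}_{\theta_A}[u_A^N(\theta_A)] + \mathbb{E}_{\theta_A}[U_B(s_A',\gamma',\theta_B)] + R_A(s_A',\gamma'),
\]
where $R_A(s_A',\gamma')=\mathbb{E}_{\theta_A}\bigl[\max(0,\gamma'\Delta_B(s_A')-\Delta_A(s_A',\theta_A))\bigr]\geq 0$ is player~$A$'s rent under the alternative strategy. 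So the inequality you need runs the \emph{wrong way}: your chain gives a lower bound on $\mathbb{E}[SW(s_A^*,\gamma^*)]$ that sits \emph{below} $\mathbb{E}[SW(s_A',\gamma')]$ by exactly $R_A(s_A',\gamma')$. Closing the argument would require showing $R_A(s_A',\gamma')\leq \mathbb{E}[U_B(s_A^*,\gamma^*)]-\mathbb{E}[U_B(s_A',\gamma')]$, but the first-order condition you invoke, $P'(\gamma')(1-\gamma')=P(\gamma')$, is a statement purely about the marginal accepting type at $s_A'$; it gives no handle on the inframarginal integral $R_A(s_A',\gamma')=\int_0^{\gamma'\Delta_B}F_{\Delta_A(s_A')}(x)\,dx$, and says nothing at all about the optimality gap, which depends on $(s_A^*,\gamma^*)$.

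The paper does not try to bound $A$'s rent directly. Instead it compares the acceptance probabilities $P(s_A',\gamma',\theta_B)$ and $P(s_A^*,\gamma^*,\theta_B)$ and splits into two cases. When the mechanism's probability is larger, $A$'s expected payoff is also larger, and combined with $B$'s optimality this already gives the welfare comparison. When it is smaller, the paper introduces auxiliary offers $\gamma''$ and $\gamma^{**}$ chosen so that the two strategies have \emph{equal} acceptance probabilities; at matched probabilities $A$'s expected payoff is the same under both, which lets $B$'s optimality translate directly into a welfare inequality. This probability-matching device is the step your argument is missing.
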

\begin{proof}
  Let $\gamma^*= \gamma^*(s_A^*,\theta_B)$ and $\gamma'=
  \gamma^*(s_A',\theta_B)$. The optimality condition of $s^*$ implies that
\begin{equation} \label{eq:opt}
\mathbb{E}_{\theta_A} \left[U_B(s',\gamma',\theta_B)\right] \leq\mathbb{E}_{\theta_A} \left[U_B(s^*,\gamma^*,\theta_B)\right].
\end{equation}
Two cases can occur. The first case is
\[
P(s_A',\gamma',\theta_B) \leq
P(s_A^*,\gamma^*,\theta_B),
\]
i.e., the probability of player
$A$ accepting offer $(s_A^*,\gamma^*)$ is greater than if offered
$(s_A',\gamma')$. Then, it must be that the expected payoff of player
$A$ is greater when offered $(s_A^*,\gamma^*)$, i.e.,
\[
\mathbb{E}_{\theta_A} \left[U_A(s_A',\gamma',\theta_B)\right] \leq\mathbb{E}_{\theta_A} \left[U_A(s_A^*,\gamma^*,\theta_B)\right].
\]
This, together with Inequality (\ref{eq:opt}) results in the
single-offer mechanism having a greater expected social welfare.

  The second case is
  \[
  P(s_A',\gamma',\theta_B) >
  P(s_A^*,\gamma^*,\theta_B).
  \]
  Consider $\gamma''$ such
  that $P(s_A',\gamma'',\theta_B) =
  P(s_A^*,\gamma^*,\theta_B)$. The fact that the
  probabilities of acceptance are the same implies that the expected
  payoff of Player $A$ is the same in both cases, i.e.,
  $\mathbb{E}_{\theta_A} \left[U_A(s_A',\gamma'',\theta_B)\right] =
  \mathbb{E}_{\theta_A}
  \left[U_A(s_A^*,\gamma^*,\theta_B)\right]$. This, together with
  Equation (\ref{eq:opt}) yields
  \[
  \mathbb{E}_{\theta_A}[SW(s^*,\gamma^*,\theta_B)]\geq
  \mathbb{E}_{\theta_A}[SW(s',\gamma'',\theta_B)].
  \]
   This is equivalent to
\begin{equation}\label{eq:sw}
u_B(s^*,\theta_B) + \mathbb{E}_{\theta_A}[u_A(s_A^*,\theta_A)] \geq
u_B(s',\theta_B) + \mathbb{E}_{\theta_A}[u_A(s_A',\theta_A)].
\end{equation}
Similarly, consider $\gamma^{**}$ such that
\[
P(s_A',\gamma',\theta_B) =
P(s_A^*,\gamma^{**},\theta_B),
\] which implies
\[
\mathbb{E}_{\theta_A} \left[U_A(s_A',\gamma',\theta_B)\right] =
\mathbb{E}_{\theta_A}
\left[U_A(s_A^*,\gamma^{**},\theta_B)\right].
\]
Existence of
$\gamma^{**}$ is guaranteed by Inequality (\ref{eq:sw}) which states
that, there is more money in expectation to transfer to player $A$
when choosing $s^*$ over $s'$. The fact that the acceptance
probabilities are the same, together with Inequality (\ref{eq:sw}),
implies that
\[
\mathbb{E}_{\theta_A}[SW(s^*,\gamma^{**},\theta_B)]\geq
\mathbb{E}_{\theta_A}[SW(s',\gamma',\theta_B)].
\]
Given that the
expected payoff of player $A$ is the same in both cases, it must be
the case that the expected payoff of player $B$ is higher when using
$(s_A^*,\gamma^{**})$.

Therefore, we have found an offer for the single-offer mechanism
with greater expected social welfare and a greater payoff for
player $B$ compared with strategy $\langle s_A', \gamma' \rangle$.
\end{proof}


\noindent
We are ready to derive an upper bound for the induced \emph{price of
  anarchy} for the single-offer mechanism.  We first derive the price
of anarchy of strategy $\langle s_A', \gamma' \rangle$ in case of agreement
and disagreement of player $A$.

\begin{lemma}
\label{lem:single}
Consider action $s'=\arg \max_{s \in \mathcal{S}} u_B(s,\theta_B)$ and
let $PoA^A(\gamma)$ and $PoA^R(\gamma)$ denote the induced price of
anarchy if player $A$ accepts and rejects the offer given a proposed
$\gamma$. Then,
\[
PoA^A(\gamma) = 1+\gamma \quad \text{and} \quad PoA^R(\gamma) = 1 + \frac{1}{\gamma}.
\]
\end{lemma}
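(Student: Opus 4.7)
The plan is to unpack the two outcomes produced by the strategy $\langle s_A', \gamma \rangle$ and, in each case, bound the ratio of $OPT(\theta) := \max_{s \in \mathcal{S}} SW(s, \theta)$ to the realized social welfare in terms of $\gamma$ alone. Write $\bar u_A(\theta_A) = \max_{s \in \mathcal{S}} u_A(s, \theta_A)$ and $\bar u_B(\theta_B) = \max_{s \in \mathcal{S}} u_B(s, \theta_B)$. By the very definition of $s_A'$, the joint action $s' = (s_A', s_B(s_A', \theta_B))$ attains $u_B(s', \theta_B) = \bar u_B(\theta_B)$, so $\Delta_B(s_A', \theta_B) = \bar u_B(\theta_B) - u_B^O(s_A', \theta_B) \leq \bar u_B(\theta_B)$. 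Moreover $OPT(\theta) \leq \bar u_A(\theta_A) + \bar u_B(\theta_B)$, which will be the common numerator of both bounds.

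For $PoA^A(\gamma)$, the realized joint action is $s'$ itself, giving $SW^A = u_A(s_A', \theta_A) + \bar u_B(\theta_B)$. Proposition \ref{prop:accept} combined with $\Delta_B \leq \bar u_B$ yields $u_A(s_A', \theta_A) \geq \bar u_A(\theta_A) - \gamma\, \bar u_B(\theta_B)$, and trivially $u_A(s_A', \theta_A) \geq 0$, so $SW^A \geq \max\{\bar u_B(\theta_B),\ \bar u_A(\theta_A) + (1 - \gamma)\, \bar u_B(\theta_B)\}$. Writing $r = \bar u_A(\theta_A) / \bar u_B(\theta_B)$, I would then bound $OPT / SW^A \leq (r+1)/\max\{1,\ r + 1 - \gamma\}$ and note that this function equals $r+1$ on $[0, \gamma]$ and $(r+1)/(r+1-\gamma)$ on $[\gamma, \infty)$, attaining its supremum $1 + \gamma$ exactly at $r = \gamma$. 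Tightness follows from a one-way game with $u_A(s_A') = 0$, $\bar u_A = \gamma \bar u_B$, $u_B^O = 0$, and an auxiliary action that realizes $\bar u_A + \bar u_B$ jointly.

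For $PoA^R(\gamma)$, the realized outcome is $(s_A^N(\theta_A), s_B^O(s_A', \theta_B))$ and $SW^R = \bar u_A(\theta_A) + u_B^O(s_A', \theta_B)$. Rejection combined with $u_A(s_A', \theta_A) \geq 0$ forces $\gamma\, \Delta_B(s_A', \theta_B) < \bar u_A(\theta_A)$; substituting $\Delta_B = \bar u_B - u_B^O$ rearranges to $\bar u_A + \bar u_B < (1 + 1/\gamma)\, \bar u_A + u_B^O$. Dividing by $SW^R = \bar u_A + u_B^O$ and using $u_B^O \geq 0$ gives $OPT / SW^R \leq (\bar u_A + \bar u_B)/(\bar u_A + u_B^O) < 1 + 1/\gamma$, with tightness obtained by letting $u_B^O \to 0$ and $\gamma\, \bar u_B \to \bar u_A$ from below.

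The main obstacle, once the outcomes and the acceptance/rejection inequalities are in place, is the acceptance case: the trivial bound $u_A(s_A') \geq 0$ must be used alongside Proposition \ref{prop:accept}, because the latter alone becomes vacuous (in fact negative) as soon as $\gamma\, \bar u_B$ exceeds $\bar u_A$, and it is exactly the transition $r = \gamma$ between these two regimes that pins the worst-case ratio at $1 + \gamma$ rather than leaving it unbounded.
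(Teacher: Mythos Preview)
Your argument is correct and lands on the same bounds as the paper. The rejection case is essentially identical to the paper's: both use $u_A(s_A',\theta_A)\ge 0$ together with the rejection inequality $\gamma\Delta_B < u_A^N$ to conclude $\gamma\,\bar u_B < u_A^N + u_B^O$, and then divide by $SW^R$.

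The acceptance case differs in execution. You push the acceptance inequality into the \emph{denominator}, bounding $u_A(s_A')\ge \bar u_A - \gamma\,\bar u_B$, and then must pair this with the trivial bound $u_A(s_A')\ge 0$ and carry out a case split over the ratio $r=\bar u_A/\bar u_B$ to locate the worst case at $r=\gamma$. The paper instead pushes the same inequality into the \emph{numerator}: from $u_A^N \le u_A' + \gamma\Delta_B \le u_A' + \gamma\,\bar u_B$ one gets
\[
PoA^A \;\le\; \frac{u_A^N + \bar u_B}{u_A' + \bar u_B}
\;\le\; \frac{u_A' + \bar u_B + \gamma\,\bar u_B}{u_A' + \bar u_B}
\;=\; 1 + \gamma\,\frac{\bar u_B}{u_A' + \bar u_B}
\;\le\; 1+\gamma,
\]
which is a single line and avoids the two regimes and the optimization in $r$ entirely. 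Your route is a valid alternative, and it has the minor benefit that the case analysis makes the tightness point $r=\gamma$ explicit; the paper's route buys brevity. Your tightness construction for $PoA^A$ works but leans on tie-breaking in $\arg\max u_B$ (the auxiliary action also attains $\bar u_B$), which you may want to flag; the paper itself only proves the upper bounds despite the ``$=$'' in the statement.
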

\begin{proof}
  Let  $s_A^N=s^N_A(\theta_A)$, $u_A^N = u_A(s_A^N,\theta_A)$, $u_A'=u_A(s',\theta_A)$,
  $u_B'=u_B(s',\theta_B)$, $s_B^O=s^O_B(s_A',\theta_A)$ and  $u_B^O=u_B^O(s_A',\theta_B)$.
   Player $B$ offers action $s_A'$ and a monetary value of $\gamma \Delta_B(s_A')= \gamma (u_B'-u_B^O)$ to player $A$.
    Two cases can occur.

\begin{description}
\item [{Player A accepts:}]
 $u_A' + \gamma \Delta_B(s_A') \geq  u_A^N$.
Strategy $(s_A',s_B')$ is played.
\begin{eqnarray*}
PoA^{A} & \leq & \frac{u_{A}^{N}+u_{B}'}{u'_{A}+u_{B}'}\leq\frac{u'_{A}+u_{B}'+\gamma\cdot u_{B}'}{u'_{A}+u_{B}'}\\
 & = & 1+\gamma\frac{u_{B}'}{u'_{A}+u_{B}'}\leq 1+\gamma.
\end{eqnarray*}
\item [{Player A rejects:}]
$u_A' +  \gamma \Delta_B(s_A')  <  u_A^N$.
Strategy $(s_A^N, s_B^O)$ is played.\\
\[
PoA^{R}  \leq  \frac{u_{A}^{N}+u_{B}'}{u_{A}^N+u_B^O}\leq 1+\frac{u_{B}'}{u_{A}^N+u_B^O}  \leq   1+\frac{1}{\gamma},
 \]
 where the last inequality comes from
 \[
 u_A' +  \gamma \Delta_B(s_A')  <  u_A^N \Leftrightarrow   \gamma u_B'  <  u_A^N +\gamma  u_B^O - u_A' <  u_A^N +  u_B^O.
 \]
\end{description}
\end{proof}

\noindent
When $\gamma = 1$, the price of anarchy is 2 but player $B$
has no incentive to choose such a value. If $\gamma = 0.5$, the price
of anarchy is $3$. Of course, player $B$ will choose $\gamma'=
\gamma^*(s_A',\theta_B)$. Lemma \ref{lem:single} indicates that the
worst-case outcome is $(1+\gamma')$ when player $A$ accepts with a
probability $P(s_A',\gamma',\theta_B)$ and
$(1+\frac{1}{\gamma'})$ otherwise. This yields the following result.

\begin{theorem}\label{th:BayesPoA}
The Bayesian price of anarchy of the single-offer mechanism for one-way games is at most
\[
\frac{\gamma'+1}{\gamma'} \left( 1- P(s_A',\gamma',\theta_B)(1-\gamma') \right),
\]
where
\[
\gamma' = \argmax_{\gamma \in \mathbb{R}_{[0,1]}} P(s_A',\gamma,\theta_B) (1 - \gamma).
\]
\end{theorem}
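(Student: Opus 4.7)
My approach is to bound the Bayesian PoA of the mechanism by first bounding the expected PoA of the specific fixed strategy $\langle s_A', \gamma' \rangle$ via Lemma \ref{lem:single}, and then invoking Lemma \ref{lem:upper_bound} to transfer this bound to the mechanism in which player $B$ plays her utility-maximizing strategy.

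For the fixed strategy, Lemma \ref{lem:single} provides per-realization bounds: the PoA is at most $1+\gamma'$ on the event that player $A$ accepts the offer, and at most $1+1/\gamma'$ on the event she rejects. Since acceptance occurs with probability $P := P(s_A', \gamma', \theta_B)$ over $\theta_A$ (for fixed $\theta_B$), taking the weighted average gives
\[
\mathbb{E}_{\theta_A}\bigl[\mathrm{PoA}_{\text{fixed}}\bigr] \;\leq\; P(1+\gamma') + (1-P)\left(1+\frac{1}{\gamma'}\right).
\]
A short algebraic manipulation rewrites the right-hand side as $\frac{\gamma'+1}{\gamma'}\bigl(1 - P(1-\gamma')\bigr)$, which is exactly the claimed bound.

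To finish, I would apply Lemma \ref{lem:upper_bound}: the mechanism's expected social welfare dominates that of the fixed strategy $\langle s_A', \gamma' \rangle$, so against a common optimum benchmark the mechanism's PoA is at most that of the fixed strategy. The one subtlety — more than a real obstacle — is that Lemma \ref{lem:upper_bound} is stated in terms of expected SW and not expected PoA directly; this is resolved by interpreting the Bayesian PoA at fixed $\theta_B$ as the ratio of the common expected optimum social welfare to the expected mechanism SW, so that the SW domination immediately yields the PoA domination and the bound derived for the fixed strategy carries over to the mechanism.
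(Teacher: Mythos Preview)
Your proposal is correct and follows essentially the same approach as the paper: both combine Lemma~\ref{lem:single} (per-event PoA bounds under accept/reject) with the acceptance probability $P(s_A',\gamma',\theta_B)$ to get the weighted bound, simplify it algebraically to $\frac{\gamma'+1}{\gamma'}(1-P(1-\gamma'))$, and invoke Lemma~\ref{lem:upper_bound} to pass from the fixed strategy to the mechanism. Your explicit remark about the SW-versus-PoA subtlety in applying Lemma~\ref{lem:upper_bound} is a point the paper glosses over but does not handle differently.
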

\begin{proof}
By combining Lemmas \ref{lem:upper_bound} and \ref{lem:single}, we can derive the following upper bound for the PoA.
\begin{eqnarray*}
PoA & \leq & P(s_{A}',\gamma',\theta_{B})PoA^{A}(\gamma')+\left(1-P(s_{A}',\gamma',\theta_{B})\right)PoA^{R}(\gamma')\\
 & = & P(s_{A}',\gamma',\theta_{B})\left(1+\gamma'\right)+\left(1-P(s_{A}',\gamma',\theta_{B})\right)\left(1+\frac{1}{\gamma'}\right)\\
 & = & 1+\frac{1}{\gamma'}+P(s_{A}',\gamma',\theta_{B})\left(\gamma'-\frac{1}{\gamma'}\right)\\
 & = & \frac{\gamma'+1}{\gamma'}+P(s_{A}',\gamma',\theta_{B})\left(\frac{\gamma'^{2}-1}{\gamma'}\right)\\
 & = & \frac{\gamma'+1}{\gamma'}\left(1-P(s_{A}',\gamma',\theta_{B})\left(1-\gamma'\right)\right)
\end{eqnarray*}
\end{proof}

\noindent
To get a better idea of how the mechanism improves the social welfare,
it is useful to quantify the price of anarchy in Theorem
\ref{th:BayesPoA} for a specific class of distributions.

\begin{corollary}
  If $\Delta_A(s_A',\theta_A)$ has a cumulative distribution function
  $F(x)=(x/\Delta_B)^\beta$ between $0$ and $\Delta_B$, with $0 < \beta
  \leq 1$, then $\gamma =\frac{\beta}{\beta+1}$ and the price of
  anarchy is at most
 $$(2+\frac{1}{\beta}) (1-\beta^\beta (1+\beta)^{-(\beta+1)}).$$
 For example, if $\beta=1$, then $F(x)$ is the uniform distribution, $\gamma =
 \frac{1}{2}$, and the expected price of anarchy is at most $2.25$.
\end{corollary}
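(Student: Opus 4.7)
The plan is to instantiate Theorem~\ref{th:BayesPoA} on the specified distribution by following three routine steps: compute the acceptance probability in closed form, solve a one-variable maximization for $\gamma'$, then substitute into the bound and simplify.

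First, Definition~\ref{def:dist} and the hypothesis that $\Delta_A(s_A',\theta_A)$ has CDF $F(x) = (x/\Delta_B)^\beta$ on $[0,\Delta_B]$ immediately yield $P(s_A',\gamma,\theta_B) = \Pr[\Delta_A(s_A',\theta_A) \leq \gamma \Delta_B] = F(\gamma \Delta_B) = \gamma^\beta$ for every $\gamma \in [0,1]$.

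Next, I would maximize $P(s_A',\gamma,\theta_B)(1-\gamma) = \gamma^\beta(1-\gamma)$ over $\gamma \in [0,1]$. The first-order condition $\beta\gamma^{\beta-1}(1-\gamma) = \gamma^\beta$ reduces to $\beta(1-\gamma) = \gamma$, giving $\gamma' = \beta/(\beta+1)$; since the objective vanishes at the endpoints $\gamma=0$ and $\gamma=1$ and is strictly positive on the interior for $\beta > 0$, this critical point is the unique maximum.

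Finally, I would plug $\gamma' = \beta/(\beta+1)$ and $P(s_A',\gamma',\theta_B) = (\beta/(\beta+1))^\beta$ into the bound from Theorem~\ref{th:BayesPoA}. Elementary algebra gives $(\gamma'+1)/\gamma' = (2\beta+1)/\beta = 2 + 1/\beta$ and $P(s_A',\gamma',\theta_B)(1-\gamma') = \beta^\beta(1+\beta)^{-(\beta+1)}$, so the bound collapses to $(2 + 1/\beta)\bigl(1 - \beta^\beta(1+\beta)^{-(\beta+1)}\bigr)$; the numerical value for $\beta = 1$ follows by direct substitution ($\gamma' = 1/2$ and PoA $\leq 3 \cdot (1 - 1/4) = 9/4$). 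No step presents a serious obstacle: the whole argument is a calculation, and the hypothesis $0 < \beta \leq 1$ is used only to ensure that $F$ is a valid CDF on $[0,\Delta_B]$ and that $\gamma'$ lies in the interior of $[0,1]$.
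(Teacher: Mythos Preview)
Your proposal is correct and follows exactly the approach the paper implicitly intends: the corollary is stated without proof in the paper, as a direct specialization of Theorem~\ref{th:BayesPoA}, and your three-step computation (compute $P(s_A',\gamma,\theta_B)=\gamma^\beta$, maximize $\gamma^\beta(1-\gamma)$ to obtain $\gamma'=\beta/(\beta+1)$, then substitute into the bound) is precisely that specialization. One minor remark: both the validity of $F$ as a CDF and the interiority of $\gamma'=\beta/(\beta+1)$ actually hold for every $\beta>0$, so the restriction $\beta\le 1$ is not strictly needed for the argument you give.
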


\noindent
This corollary, in conjunction with Lemma \ref{lem:poa2}, gives us the
cost of enforcing individual rationality, moving from a price of
anarchy of 2 to a price of 2.25 in the case of a uniform
distribution.

The strategy $\langle s_A', \gamma' \rangle$ is of independent interest.
It indicates how a player with
limited computational power can achieve an outcome that satisfies
individual rationality without optimizing over all strategies.

\section{Multi-Offer Mechanism}\label{section_multi_offer}

\label{section-multi}
This section extends the single-offer mechanism by allowing player $B$ to make multiple monetary offers
for the same proposed action.
Our main result shows that making counteroffers under commitment does not improve efficiency over the single-offer
mechanism. By commitment we mean that player $B$ must be able to guarantee that the price schedule she originally
announces will not be modified in the future.

The single-offer mechanism was characterized by a an action $s_A \in \mathcal{S}_A$
and a single value $\gamma \in \mathbb{R}_{[0,1]}$. The multi-offer mechanism is
characterized by a 4-tuple
\[
(s_A \in \mathcal{S}_A, n,\gamma=(\gamma_1,\hdots,\gamma_n) \in \mathbb{R}_{[0,1]}^n,
p=(p_1,\hdots,p_{n}) \in \mathbb{R}_{[0,1]}^{n}),
\]
where $n$ is the number of offers, $(\gamma_1,\hdots,\gamma_n)$ is a
sequence of numbers in $\mathbb{R}_{[0,1]}^n$ to compute the ratios of
$\Delta_B(s_A,\theta_B)=u_B(s_B(s_A,\theta_B),\theta_B) - u_B^O(s_A, \theta_B)$
to be offered, and
$(p_1,\hdots,p_{n})$ is a sequence of probabilities for continuing to
make offers where we assume that $p_1=1$.
The multi-offer mechanism is defined as follows:
\begin{enumerate}
\item  Player B selects an action $s_A \in \mathcal{S}_A$ to propose to player $A$.
\item Player $B$ also computes her outside option $s_B^O(s_A, \theta_B)$ in case player $A$ rejects action $s_A$,  
and we denote by $u_B^O(s_A, \theta_B)$ the expected payoff from her outside option.
\item Player $B$ selects  $\gamma=(\gamma_1,\hdots,\gamma_n) \in \mathbb{R}_{[0,1]}^n$ and
 $p=(p_1,\hdots,p_{n}) \in \mathbb{R}_{[0,1]}^{n}$ , with $p_1=1$. Player $B$ has to commit to this sequence of values
 (in spite of what she learns from player $A$'s actions).
 \item At step $1\leq i \leq n$, player $B$ proposes a monetary value of $\gamma_i \cdot \Delta_B(s_A,\theta_B)$
  with $\Delta_B(s_A,\theta_B)=u_B(s_B(s_A,\theta_B),\theta_B) - u_B^O(s_A, \theta_B)$
     and $\gamma_i \in \mathbb{R}_{[0,1]}$ to player $A$ in the hope
     that she accepts to play strategy $s_A$ instead of strategy
     $s^N_A$.
\item Player $A$ decides whether to accept the offer.
\item If player $A$ accepts the offer, the outcome of the game is  $\left( s_A, s_B(s_A,\theta_B) \right)$.
\item If player $A$ rejects the offer,  set $i\leftarrow i+1$ and go to step 4 with probability $p_i$.
\item Otherwise the outcome of the game is the outside option  $\left(s^N_A(\theta_A), s_B^O(s_A, \theta_B)\right)$.
\end{enumerate}

For ease of notation, we denote $\Delta_B(s_A) = \Delta_B(s_A,\theta_B)$ and
$\Delta_A(s_A) = \Delta_A(s_A,\theta_A)$ for the rest of this section, where $\Delta_A(s_A,\theta_A)= u_A(s_A^N(\theta_A),\theta_A)-u_A(s_A,\theta_A)$.
In the multiple-offer mechanism, player $B$ makes a sequence of offers
$\gamma_i \Delta_B(s_A)$ to player $A$ to play strategy $s_A$. The first
offer is $\gamma_1 \Delta_B(s_A)$. If player $A$ refuses the offer, then
player $B$ makes a second offer $\gamma_2 \Delta_B(s_A)$ with probability
$p_2$. Hence, with probability $1-p_2$, player $B$ makes no offer and
the outcome of the game is  $\left(s^N_A(\theta_A), s_B^O(s_A, \theta_B)\right)$.
In general, at iteration $i$, player $B$ makes an offer $\gamma_i \Delta_B(s_A)$
with probability $p_{i}$ and the outside option is played with probability $1 - p_{i}$.
The mechanism stops when player $A$ accepts an offer or when Player $B$ stops
making offers to player $A$. In this last case, once again, the outside option is played.

Observe that player $A$ could reject an offer even if it is more
profitable than playing her maximizing utility action $s^N_A(\theta_A)$ because she may expect a
better offer in the future. To avoid this behavior, the multi-offer
mechanism imposes a condition on the $\gamma_i$'s and $p_i$'s to
ensure that player A accepts the first offer that gives her a higher
payoff than her default action $s^N_A(\theta_A)$.
Two conditions must hold for player $A$ to accept an offer in step $i\in [1,\ldots,n]$: \\
{\em (a) Individual Rationality}:
\begin{eqnarray}
  \gamma_i \Delta_B(s_A) \geq \Delta_A(s_A),
\label{eq:accept1}
  \end{eqnarray}
  which is equivalent to Proposition \ref{prop:accept}.\\
{\em (b) Greater expected utility in step $i$ than in step $i+1$}:
\[
  \gamma_i  \Delta_B(s_A)+ u_A(s_A,\theta_A) \geq
 p_{i+1} \left(\gamma_{i+1} \Delta_B(s_A) + u_A(s_A,\theta_A) \right) + (1-p_{i+1})u_A(s_A^N\left(\theta_A),\theta_A\right)
\]
which is equivalent to
\begin{eqnarray}
\frac{\gamma_i - p_{i+1} \gamma_{i+1} }{1 - p_{i+1}}\Delta_B(s_A) \geq \Delta_A(s_A).
\label{eq:accept2}
\end{eqnarray}

\noindent
We now show that the multiple-offer mechanism is in fact equivalent to
the single-offer mechanism.  We use the notation
\[S_i=
\begin{cases}
0 & i=0,\\
\frac{\gamma_{i}-p_{i+1}\gamma_{i+1}}{1-p_{i+1}} & n>i>0,\\
\gamma_n & i=n.
\end{cases}
\]
so that Condition (\ref{eq:accept2}) can be expressed as $S_i  \Delta_B(s_A) \geq \Delta_A(s_A)$.

Note that if player $A$ refuses an offer with $\gamma_i$,
she will also refuse offers with smaller ratios. This observation
leads to the following proposition.
\begin{proposition}\label{prop:increasing} In the multi-offer mechanism,
\[
\gamma_{i+1} > \gamma_i,\; \forall i\in [1,\ldots,n-1].
\]
\end{proposition}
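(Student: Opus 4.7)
The plan is to prove the contrapositive: if $\gamma_{i+1} \le \gamma_i$ for some $i \in [1,\ldots,n-1]$, then the $(i+1)$-st offer is redundant in the sense that no type of player $A$ who is prepared to accept it at step $i+1$ has failed to accept already at step $i$. Hence in any useful (in particular, $B$-optimal) multi-offer schedule, the $\gamma_i$'s must be strictly increasing.

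First I would bound $S_i$ from below by $\gamma_i$. Under the assumption $\gamma_{i+1} \le \gamma_i$ and using $p_{i+1}\in[0,1]$, a one-line calculation gives
\[
S_i \;=\; \frac{\gamma_i - p_{i+1}\gamma_{i+1}}{1-p_{i+1}} \;\ge\; \frac{\gamma_i - p_{i+1}\gamma_i}{1-p_{i+1}} \;=\; \gamma_i.
\]
Next I would take any type $\theta_A$ for which the individual-rationality inequality (\ref{eq:accept1}) holds at step $i+1$, i.e.\ $\gamma_{i+1}\Delta_B(s_A)\ge \Delta_A(s_A)$, and verify both acceptance conditions at step $i$: (\ref{eq:accept1}) at step $i$ follows from $\gamma_i\ge\gamma_{i+1}$, while (\ref{eq:accept2}) at step $i$ follows from $S_i\Delta_B(s_A)\ge \gamma_i\Delta_B(s_A)\ge \Delta_A(s_A)$. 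Consequently every type of $A$ who would accept at step $i+1$ has already accepted at step $i$, and the $(i+1)$-st offer is never taken.

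Finally, I would conclude by noting that a step that is never accepted contributes nothing to $B$'s expected payoff, so in an optimal schedule such offers are eliminated (or replaced by some $\gamma_{i+1}>\gamma_i$), which yields strict monotonicity. I expect the main subtlety to be the commitment aspect of the multi-offer mechanism: since $B$ cannot modify her schedule in response to $A$'s behavior, one must check that removing a dominated step does not perturb the incentive conditions at other steps; this is immediate because condition (\ref{eq:accept2}) at step $j$ only involves the adjacent pair $(\gamma_j,\gamma_{j+1},p_{j+1})$. The boundary case $i=n-1$ is handled by the same inequality on $S_{n-1}$, since its definition has the same form as for interior indices.
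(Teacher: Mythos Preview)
Your proposal is correct and follows essentially the same approach as the paper: the paper's entire justification is the one-line observation preceding the proposition (``if player $A$ refuses an offer with $\gamma_i$, she will also refuse offers with smaller ratios''), and you flesh this out by making explicit the inequality $S_i \ge \gamma_i$ under the assumption $\gamma_{i+1}\le\gamma_i$, which is exactly the computation needed to verify that observation in the presence of the forward-looking condition~(\ref{eq:accept2}). One minor caveat: your displayed bound on $S_i$ tacitly assumes $p_{i+1}<1$; the degenerate case $p_{i+1}=1$ (where $S_i$ is formally undefined) is handled directly since condition~(\ref{eq:accept2}) then reduces to $\gamma_i\ge\gamma_{i+1}$, which holds by hypothesis.
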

\noindent
Therefore, Proposition \ref{prop:increasing} states that counteroffers should be increasing on time.
\begin{lemma}\label{lem:gamma}
In the multiple-offer mechanism, for all $i\in [1,\ldots,n]$,
\[
  \gamma_i \geq S_i.
\]
\end{lemma}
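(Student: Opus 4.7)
The plan is to dispatch the lemma by a direct algebraic manipulation, relying almost entirely on Proposition \ref{prop:increasing}. The statement splits into two cases based on the piecewise definition of $S_i$, and the boundary case $i=n$ is immediate since by definition $S_n = \gamma_n$, so $\gamma_n \geq S_n$ holds with equality.

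The substantive case is $0 < i < n$, where I must show
\[
\gamma_i \;\geq\; \frac{\gamma_i - p_{i+1}\gamma_{i+1}}{1 - p_{i+1}}.
\]
My approach is to start from Proposition \ref{prop:increasing}, which gives $\gamma_{i+1} > \gamma_i$, and multiply both sides by $p_{i+1} \in [0,1]$ (note $p_{i+1} \geq 0$ so the inequality direction is preserved) to obtain $p_{i+1}\gamma_{i+1} \geq p_{i+1}\gamma_i$. Rewriting, this yields $\gamma_i - p_{i+1}\gamma_{i+1} \leq \gamma_i - p_{i+1}\gamma_i = \gamma_i(1 - p_{i+1})$. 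Dividing through by $1 - p_{i+1} > 0$ (which is well-defined precisely when $S_i$ is well-defined in the piecewise formula) yields the target inequality $\gamma_i \geq S_i$.

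There is no genuine obstacle; the only subtlety is the corner case $p_{i+1} = 1$. In that situation the denominator in $S_i$ vanishes, but this is ruled out implicitly by the definition of $S_i$ in the lemma statement. Conceptually, $p_{i+1} = 1$ would make offer $i+1$ a certainty, and coupled with $\gamma_{i+1} > \gamma_i$ from Proposition \ref{prop:increasing}, player $A$ would strictly prefer to wait for the next offer, so an acceptance at step $i$ would be impossible and the condition (\ref{eq:accept2}) would be vacuous. Thus restricting to $p_{i+1} < 1$ aligns with the regime in which the lemma is meaningful, and the proof goes through as sketched.
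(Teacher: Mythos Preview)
Your proof is correct and takes essentially the same approach as the paper: both arguments reduce the inequality $\gamma_i \geq S_i$ to Proposition~\ref{prop:increasing} via the same one-line algebraic rearrangement. The paper frames it as a proof by contradiction (assume $\gamma_i < S_i$ and derive $\gamma_i > \gamma_{i+1}$), whereas you argue directly from $\gamma_{i+1} > \gamma_i$, but the underlying manipulation is identical; your explicit handling of the boundary case $i=n$ and the degenerate case $p_{i+1}=1$ is a nice addition.
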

\begin{proof}
  Assume that $\gamma_{i}<S_{i}$. By definition of $S_i$, it follows
  that $\gamma_{i}-p_{i+1}\gamma_{i}<\gamma_{i}-p_{i+1}\gamma_{i+1}$
  and hence $\gamma_{i}>\gamma_{i+1}$. This contradicts Proposition \ref{prop:increasing},
  stating that the $\gamma$'s are defined as a non-decreasing sequence.
\end{proof}
\begin{corollary}
\label{col:accept1}
Condition (\ref{eq:accept2}) implies Condition (\ref{eq:accept1}).
\end{corollary}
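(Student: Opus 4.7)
The plan is to argue this as a one-line deduction from Lemma \ref{lem:gamma}. Since Condition (\ref{eq:accept2}) is precisely the inequality $S_i \cdot \Delta_B(s_A) \geq \Delta_A(s_A)$, and Lemma \ref{lem:gamma} gives us $\gamma_i \geq S_i$, I would simply chain the two bounds to recover Condition (\ref{eq:accept1}).

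More concretely, first I would invoke Lemma \ref{lem:gamma} to write $\gamma_i \geq S_i$. Then I would note that $\Delta_B(s_A, \theta_B) \geq 0$: this holds because, by definition, $u_B(s_B(s_A, \theta_B), \theta_B)$ is the payoff of player $B$'s best response to $s_A$, which is at least the payoff of any other action including the one used when constructing the outside option, so multiplying the inequality $\gamma_i \geq S_i$ by $\Delta_B(s_A)$ preserves its direction. The chain
\[
\gamma_i \cdot \Delta_B(s_A) \;\geq\; S_i \cdot \Delta_B(s_A) \;\geq\; \Delta_A(s_A)
\]
then yields Condition (\ref{eq:accept1}), where the second inequality is Condition (\ref{eq:accept2}) itself.

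There is essentially no technical obstacle here; the corollary is a bookkeeping consequence of the monotonicity of the $\gamma_i$'s established in Proposition \ref{prop:increasing} and repackaged in Lemma \ref{lem:gamma}. The only subtlety worth flagging explicitly in the write-up is the nonnegativity of $\Delta_B(s_A)$, which justifies preserving the inequality direction when multiplying through; this follows immediately from $s_B(s_A, \theta_B)$ being a best response for $B$ when $s_A$ is played. Beyond that, the corollary is just the observation that, once the forward-looking incentive constraint (\ref{eq:accept2}) binds, the weaker ex-post individual rationality constraint (\ref{eq:accept1}) is automatic.
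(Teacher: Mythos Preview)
Your proof is correct and follows exactly the paper's one-line argument: invoke Lemma \ref{lem:gamma} for $\gamma_i \geq S_i$, then chain with Condition (\ref{eq:accept2}) to obtain Condition (\ref{eq:accept1}). One minor quibble: your stated justification for $\Delta_B(s_A) \geq 0$ via the best-response property is not quite the right comparison (the outside option payoff $u_B^O$ is computed against $s_A^N(\theta_A)$, not against $s_A$), but the nonnegativity is implicit in the mechanism's setup and the paper simply uses it without comment.
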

\begin{proof}
$S_i \leq \gamma_i$ and $\Delta_A(s_A) \leq S_i\Delta_B(s_A)$ implies
$\Delta_A(s_A) \leq S_i\Delta_B(s_A) \leq \gamma_i\Delta_B(s_A)$.
\end{proof}

\noindent
If player $A$ rejected the offer in step $i-1$, then Conditions
(\ref{eq:accept1}) and (\ref{eq:accept2}) were both not satisfied in
step $i-1$.  By Lemma \ref{lem:gamma}, only two cases may occur:
\begin{enumerate}
\item
If $\gamma_i\Delta_B(s_A)<\Delta_A(s_A)$ then it must be the case that
\begin{equation}\label{eq:reject1}
S_{i-1} \Delta_B(s_A) \leq \gamma_{i-1} \Delta_B(s_A) <\Delta_A(s_A).
\end{equation}
\item
If $\gamma_i\Delta_B(s_A) \geq \Delta_A(s_A)$ then
\begin{equation}\label{eq:reject2}
S_{i-1}\Delta_B(s_A) <\Delta_A(s_A) \leq \gamma_{i-1}\Delta_B(s_A).
\end{equation}
\end{enumerate}

\noindent
The disjunction of Conditions $(\ref{eq:reject1})$ and $(\ref{eq:reject2})$ yields the following inequality
\begin{equation}\label{eq:reject}
S_{i-1}\Delta_B(s_A) < \Delta_A(s_A).
\end{equation}
By Corollary \ref{col:accept1}, if player $A$ accepts in step $i$ given that she rejected in step $i-1$, we have that
\begin{equation}\label{eq:accept3}
\Delta_A(s_A) \leq S_i \Delta_B(s_A).
\end{equation}
Recalling Definition \ref{def:dist}, the cumulative distribution function of random variable $\Delta_A(s_A)$ was denoted by,
\[
P(s_A,\gamma) = \Pr\left[ \Delta_A(s_A) \leq \gamma \Delta_B(s_A) \right].
\]
Hence the probability of acceptance in step $i$ can be derived from Conditions
(\ref{eq:reject}) and (\ref{eq:accept3}) as
\[
\Pr[S_{i-1}\Delta_B(s_A) <\Delta_A(s_A) \leq S_i \Delta_B(s_A)]= P(s_A,S_i) -  P(s_A,S_{i-1}).
\]
Player $B$ aims at choosing the $\gamma_i$'s, the probabilities $p_i$'s and action $s_A$ to maximize her expected utility,
which is equivalent to the following optimization problem.
 \begin{alignat}{2}
     \max_{\gamma,p} & \sum^n_{i=1}\left[ \left( \prod^i_{j=1}p_j \right) \left( P(s_A,S_i) -  P(s_A,S_{i-1}) ) \right) (1-\gamma_i)\right]  \label{eq:max1} \\
    \text{s.t. } 		& p_1 = 1,\\
    				& S_0 = 0, \\
    				& S_i = \frac{\gamma_{i}-p_{i+1}\gamma_{i+1}}{1-p_{i+1}}, \label{eq:max2}  \\
    				& S_n = \gamma_n, \label{eq:max3}  \\
    				& S_1\leq S_2 \leq \ldots \leq S_n\leq 1.
  \end{alignat}
Where the term $\prod^i_{j=1}p_j $ is the probability of reaching to the $i$-th offer.
We are now ready to state the main result of this section.
\begin{theorem} The multi-offer mechanism is equivalent to the single-offer mechanism
in one-way games.
\end{theorem}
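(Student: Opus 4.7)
My plan is to show that, for any fixed action $s_A \in \mathcal{S}_A$, player $B$'s expected payoff under any multi-offer mechanism can be rewritten exactly as a convex combination of single-offer payoffs of the form $P(s_A,S_i)(1-S_i)$. The equivalence with the single-offer mechanism then follows from two observations: the maximum of a convex combination of values is at most the maximum of those values, so the best multi-offer payoff is at most the best single-offer payoff; and every single-offer mechanism is a special case of the multi-offer mechanism (with $n=1$), which gives the reverse direction. Since payments are transfers between the players, this equivalence extends from $B$'s expected utility to the induced social welfare.

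The key algebraic identity comes from introducing $q_i = \prod_{j=1}^{i} p_j$, the probability of reaching step $i$, with the conventions $q_{n+1}=0$, $S_0=0$, and $\gamma_n = S_n$. Rearranging the definition of $S_i$ gives $\gamma_i = (1-p_{i+1})S_i + p_{i+1}\gamma_{i+1}$, and hence the recursion
\[
q_i(1-\gamma_i) \;=\; (q_i - q_{i+1})(1-S_i) + q_{i+1}(1-\gamma_{i+1}),
\]
with base case $q_n(1-\gamma_n) = (q_n - q_{n+1})(1-S_n)$.

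Next, I would split the objective in (\ref{eq:max1}) as
\[
U_B \;=\; \sum_{i=1}^n q_i\, P(s_A,S_i)(1-\gamma_i) \;-\; \sum_{i=1}^n q_i\, P(s_A,S_{i-1})(1-\gamma_i),
\]
shift the index $i \mapsto i+1$ in the second sum (whose $i=0$ term vanishes because $P(s_A,S_0)=0$), and regroup to obtain $\sum_{i=1}^{n-1} P(s_A,S_i)\left[q_i(1-\gamma_i) - q_{i+1}(1-\gamma_{i+1})\right] + q_n P(s_A,S_n)(1-\gamma_n)$. Applying the recursion above to each bracket collapses the expression to
\[
U_B \;=\; \sum_{i=1}^n (q_i - q_{i+1})\, P(s_A,S_i)(1-S_i).
\]
Since $p_{i+1}\in[0,1]$ gives $q_i - q_{i+1} \geq 0$ and the weights telescope to $q_1 - q_{n+1} = 1$, this is genuinely a convex combination. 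Therefore $U_B \leq \max_i P(s_A,S_i)(1-S_i) \leq \max_{\gamma\in[0,1]} P(s_A,\gamma)(1-\gamma)$, which is exactly the optimum attained by the single-offer mechanism.

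The main obstacle I expect is the careful index bookkeeping when collapsing the double sum, in particular aligning the shifted term on $P(s_A,S_{i-1})$ with the telescoping recursion on $q_i(1-\gamma_i)$ and correctly handling the boundary conventions $S_0=0$, $\gamma_n=S_n$, $p_1=1$, and $q_{n+1}=0$. Once the convex-combination structure is exposed, the reduction to the single-offer optimum is immediate.
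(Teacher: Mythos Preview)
Your proposal is correct and follows essentially the same route as the paper: both rewrite the objective (\ref{eq:max1}) using the identity derived from the definition of $S_i$ to obtain a sum of terms of the form $(q_i-q_{i+1})\,P(s_A,S_i)(1-S_i)$ (the paper writes $q_i-q_{i+1}$ as $\bigl(\prod_{j\le i}p_j\bigr)(1-p_{i+1})$), and then bound by $\max_x P(s_A,x)(1-x)$. Your convex-combination framing and the explicit mention of the trivial reverse inclusion ($n=1$) are slightly cleaner than the paper's induction showing that $\gamma_i=x^*$ attains the bound, but the substance is the same.
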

\begin{proof}
\noindent
By Equation (\ref{eq:max2}),
\[
 (1-p_{i+1}) (1-S_i) = (1 - \gamma_{i}) - p_{i+1} (1 - \gamma_{i+1}).
\]
Then, by using (\ref{eq:max3}) and grouping the $ P(s_A,S_i) $ terms, the objective function
becomes
 \begin{equation}\label{eq:obj}
  \sum^{n-1}_{i=1}\left[  \left( \prod^i_{j=1}p_j \right)  (1-p_{i+1}) P(s_A,S_i) (1-S_i)  \right] +   \left( \prod^n_{j=1}p_j \right) P(s_A,\gamma_n) (1-\gamma_n).
\end{equation}
Observe that each term in the objective function features an
expression of the form $ P(s_A,x)(1-x)$. Hence the objective is
bounded by above by
\[
   (\ref{eq:obj}) \leq \sum^{n-1}_{i=1} \left[ \left( \prod^i_{j=1}p_j \right)
     (1-p_{i+1}) \cdot C \right] + C \cdot \prod^n_{j=1}p_j
\]
where $C = \max_x  P(s_A,x)(1-x)$. We show that, for any given
probabilities $p$, there is a unique solution that meets this upper
bound. Let $x^* = \arg\max_x P(s_A,x)$. The right-hand term
in (\ref{eq:obj}) is optimized by setting $\gamma_n = x^*$. We show by
induction that all the other terms are optimized by setting
$\gamma_i=x^*$. Assume that this holds for
$\gamma_{i+1},\ldots,\gamma_n$. We need to optimize $P(s_A,S_i)(1-S_i)$. By induction,
\[
S_i = \frac{\gamma_{i}-p_{i+1}x^*}{1-p_{i+1}}
\]
and assigning $x^*$ to $\gamma_i$ gives $S_i = x^*$ and $P(s_A,S_i)(1-S_i) = C$.
Since all $\gamma_i$ are equal, this concludes
the proof.
\end{proof}

\noindent
The above derivation is related to a well-known result from Sobel and
Takahashi (\citeyear{sobel1983multistage}), which models an iterative
bargaining where there is a buyer with a private reservation
price and a seller with reservation price 0 who makes all the offers.
There is a known fixed discount factor for each player and, when these
discount factors are equal (this is equivalent to have a probability
for a next offer), they showed that, under commitment, the infinite
horizon bargaining is equivalent to the single shot. There are
differences between their model and ours: In our model, the buyer is
making the offers, the probabilities are not fixed a priori (Player
$B$ can choose them), and both outside options are private.

\section{Discussion}
\label{section-conclusion}

In one-way games, the utility of one player does not depend on the
decisions of the other player. We showed that, in this setting, the
outcome of a Nash equilibrium can be arbitrarily far from the social
welfare solution. We also proved that it is impossible to design a
Bayes-Nash incentive-compatible mechanism for one-way games that is
budget-balanced, individually rational, and efficient. To alleviate
these negative results, we proposed two privacy-preserving mechanisms:
a single-offer and a multi-offer mechanism and showed that both are equivalent.

The single-offer mechanism is simple for both parties, as well as for
the broker who just makes sure that the players follow the
protocol. This mechanism also requires minimal information from the
agents who perform all the combinatorial computations, while it
incentivizes them to cooperate towards the social welfare in a
distributed setting. Moreover, the mechanism has the following
desirable properties: it is budget-balanced and satisfies the
individual rationality constraints and Bayesian
incentive-compatibility conditions. Additionally, we showed that, in a
realistic setting, where agents have limited computational resources,
a simpler version of the mechanism can be implemented without overly
deteriorating the social welfare.

It is an open question whether there exists another mechanism
(possibly more complex) that could lead to a better efficiency, while
keeping the above properties. Indeed, in one-way games, player $A$ has
a intrinsic advantage over player $B$, which is not easy to overcome.
One possible promising mechanism consists of player $B$ setting rewards
for all player $A$'s actions, and player $A$ choosing one in return for that money.
This is known as the Bayesian Unit-demand Item-Pricing Problem (BUPP) \citep{chawla2007algorithmic}.
Recent work has shown this problem to be NP-hard \citep{Chen2014}, but a factor 3 approximation to
the optimal expected revenue of player $B$ is obtained in \citep{chawla2007algorithmic} (subsequently improved
to 2 in \citep{Chawla2010}). In the context of our paper, several interesting questions arise from
the Bayesian Unit-demand Item-Pricing Problem. What is the efficiency achieved by the BUPP in one-way games? What is the impact of a constant factor approximation for the revenue on the social welfare in one-way games?

There are also many other directions for future research. It is
important to generalize one-way games to multiple players. Moreover,
there are applications where the dependencies are in both directions,
e.g., the restoration of the power and the gas systems considered in
Coffrin et al. (\citeyear{coffrin2012last}). These applications
typically have multiple components to restore and the dependencies
form an acyclic graph. Hence such a mechanism would likely need to
consider this internal structure to obtain efficient outcomes.

\section*{Acknowledgments}

NICTA is funded by the Australian Government through the Department of Communications and the Australian Research Council through the ICT Centre of Excellence Program.

\bibliographystyle{plain}
\bibliography{One-way}

\end{document}